\def\doi{7 (4:05) 2011}
\let\citep=\cite
\theoremstyle{plain}
\theoremstyle{definition}
 \newtheorem{bsp}[thm]{Example}
 \newtheorem{remm}[thm]{Remark}
\def\calc{\mathcal{C}}
\def\cald{\mathcal{D}}
\def\calf{\mathcal{F}}
\def\cali{\mathcal{I}}
\def\calif{\mathcal{IF}}
\def\calp{\mathcal{P}}
\def\calv{\mathcal{V}}
\def\calw{\mathcal{W}}
\def\consc{\textbf{c}}
\def\consd{\textbf{d}}
\newcommand*{\setn}{\mathbb{N}}
\def\oto{\overrightarrow}
\def\lrbk#1{\llbracket#1\rrbracket}
\def\cons#1{\langle #1\rangle}
\def\al{\alpha}
\def\be{\beta}
\def\de{\delta}
\def\De{\Delta}
\def\ga{\gamma}
\def\Ga{\Gamma}
\def\ka{\kappa}
\def\la{\lambda}
\newcommand{\edash}{\vdash}
\def\ed{\edash}
\def\vd{\vdash}
\newcommand{\dom}{{\mathrm d}{\mathrm o}{\mathrm m}}
\newcommand{\tm}{\subseteq}
\def\tminus{\text{-}}
\def\app{\textup{\textsf{app}}}
\def\lam{\textup{\textsf{lam}}}
\def\menge#1#2{\{ #1 \, | \, #2\}}
\def\single#1{\{#1\}}
\newcommand{\rbm}[1]{\raisebox{-1.5ex}[0.5ex]{$#1$}}
\newcommand{\rbmm}[1]{\raisebox{-2.7ex}[0.7ex]{$#1$}}
\newcommand*{\equote}[1]{\textquotedblleft#1\textquotedblright}
\def\cc{\textup{\textsc{CC}}}
\def\pifn#1#2{\Pi #1\,.\,#2}
\def\bs{\backslash}
\def\arity{\textup{\textsf{Arity}}}
\def\case#1{\textup{\textsf{case}}(#1)}
\def\fix#1#2{\textup{\textsf{fix}}\,\,#1\,\{#2\}}
\def\ind{\textup{\textsf{Ind}}}
\def\indic#1{\ind_{#1}\hspace{-.05cm}\single{\De_I:=\De_C}}
\def\lett#1#2{\textup{\textsf{let}}\,\,#1\,\, \textup{\textsf{in}}\,\,#2}
\def\wf{\calw\calf}
\def\tnat{\textup{\texttt{nat}}}
\def\tprop{\textup{\texttt{Prop}}}
\def\tset{\textup{\texttt{Set}}}
\def\ttype{\textup{\texttt{Type}}}
\def\ttt#1{\textup{\texttt{#1}}}
\def\uscore{\mathunderscore}
\def\lh{\mathit{lh}}
\def\sma#1{<\!\!\!#1}
\def\equ#1{=\!\!\!#1}
\begin{document}

\title[Proof-irrelevant model of CC with pred. induction and judg. equality]{Proof-irrelevant model of  CC with predicative induction and judgmental equality}

\author[G.~Lee]{Gyesik Lee\rsuper a}	
\address{{\lsuper a}Hankyong National University, Anseong-si, Kyonggi-do, Korea}	
\email{gslee@hknu.ac.kr}  
\thanks{{\lsuper a} Corresponding author:\
For Gyesik Lee, this work was partly supported by Mid-career Researcher Program through NRF funded by the MEST (2010-0022061).}

\author[B.~Werner]{Benjamin Werner\rsuper b}	
\address{{\lsuper b} INRIA Saclay and LIX, Ecole Polytechnique, 91128 Palaiseau Cedex, France}	
\email{Benjamin.Werner@inria.fr}  




\keywords{Calculus of Constructions, judgmental equality, proof-irrelevance, consistency}
\subjclass{F.4.1, F.3.1}


\begin{abstract}
  \noindent We present a set-theoretic, proof-irrelevant model for Calculus of Constructions (CC) with predicative induction and judgmental equality in Zermelo-Fraenkel set theory with an axiom for countably many inaccessible cardinals. 
  We use Aczel's trace encoding which is universally defined for any function type, regardless of being impredicative. Direct and concrete interpretations of simultaneous induction and mutually recursive functions are also provided by extending Dybjer's interpretations on the basis of Aczel's rule sets.
  Our model can be regarded as a higher-order generalization of the truth-table methods. We provide a relatively simple consistency proof of type theory, which can be used as the basis for a theorem prover.
\end{abstract}

\maketitle


\section{Introduction}\label{S:one}

\subsection*{Informal motivation}
The {\it types-as-sets} interpretation of type theory in a sufficiently strong classical axiomatic set theory, such as the Zermelo-Fraenkel (ZF) set theory, has been regarded as the most straightforward approach to demonstrating the consistency of type theory (cf. \cite{aczel-relating} and \cite{coquand-meta}). It can be construed as a higher-order generalization of the truth-table methods. Such a model captures the intuitive meaning of the constructs: the product, $\la$-abstraction, and application correspond to the ordinary set-theoretic product, function, and application, respectively.

A straightforward model of type theory is very useful for establishing the consistency of type theory, and it can be used to determine the proof-theoretic strength of type theory (cf. \cite{aczel-relating,dybjer,dybjer00,werner-set}). However, a higher-order generalization of the trivial Boolean model is {\it not so simple} (cf. \cite{miwe}). The main cause of this problem, as identified by \cite{reynolds}, is the fact that type systems containing Girard-Reynolds' second-order calculus cannot have the usual set-theoretic interpretation of types. The only way to provide a set-theoretic meaning for an impredicative proposition type is to identify all the proof terms of that proposition type: Proposition types are interpreted either by the empty set or a singleton with a canonical element. Thus, proof-irrelevant models are necessary for interpreting reasonable higher-order type systems.

Set-theoretic models of type theory can be understood in a straightforward manner. \cite{werner-proof} showed that they can be used as the basis of proof assistants in programming with dependent types. This is because they provide a mechanism to distinguish between computational and logical parts. 
Werner's system is a proof-irrelevant version of Luo's Extended Calculus of Constructions (ECC; \citep{luo89-lics}), and the set-theoretic model is an extension of that of Calculus of Constructions (CC) defined by \cite{miwe}.

Luo's ECC is a Martin-L\"of-style extension of CC, with strong sum types and a fully cumulative type hierarchy. At the lowest level, there is an impredicative type $\tprop$ of propositions. This is followed by a hierarchy of predicative type universes $\ttype_i$, $i = 0, 1, 2,\dots$:

\begin{iteMize}{$\bullet$}
\item $\tprop$ is of type $\ttype_0$;

\item $\ttype_i$ is of type $\ttype_{i+1}$;

\item $\tprop \prec \ttype_0 \prec \ttype_{1} \prec \cdots$.
\end{iteMize}

Werner's system, however, does not include the subtyping rule $\tprop \prec \ttype_0$, which could complicate the model construction, as identified by \cite{miwe}. Their model constructions cannot be extended to ECC. We will explain this in detail in Remark \ref{counter-subtyping}.

In this paper, we investigate the inclusion of $\tprop \prec \ttype_0$, and we show that type theory with judgmental equality, {\it \`{a} la} \cite{ml-84}, can have a simple proof-irrelevant model. We expect our results to play a key role in the theoretical justification of proof systems based on Martin-L\"of-style type theory.

\subsection*{Overview of the work}
Martin-L\"of type theory and Logical Framework include typing rules for the equality of objects and types:\medskip
\[
\Ga \vd M = N : A
\qquad \text{and} \qquad
\rbm{\infer[(conv)]{\Ga \vd M : B}{\Ga \vd M : A & \Ga \vd A =  B}}
\]
In particular, Barendregt's PTS-style $\be$-conversion side condition
turns into an explicit judgment. Two objects are not just equal; they are equal with respect to a type (cf. \cite{nordstroem,goguen-phd,aczel-relating}).

The type system considered in our study is CC with predicative induction and judgmental equality. It is a type system with the following features: dependent types, impredicative type $(\tprop)$ of propositions, a cumulative hierarchy of predicative universes $(\ttype_i)$, predicative inductions, and judgmental equality.

The main difficulty in the construction of a set-theoretic model of our system stems from the impredicativity of $\tprop$ and the subtyping property $\tprop \prec \ttype_0$. Without subtyping, one could use the solution provided by \cite{miwe} and \cite{werner-proof}, whereby proof-terms are syntactically distinguished from other function terms. Thus, the problem lies in the case distinction between the impredicative type $\tprop$ and the predicative types $\ttype_i$, whereas the subsumption eliminates the difference. An interpretation function $f: \single{0, 1} \to \calv$ is required, where $\calv$ is a set universe, that is different from the identity function. See Section \ref{interpretation} for further details.

For a set-theoretic interpretation of the cumulative type universes and predicative inductions, it is sufficient to assume countably many (strongly) inaccessible cardinals. \cite{werner-set} showed that ZF with an axiom guaranteeing the existence of infinitely many inaccessible cardinals is a good candidate. However, it is not clear whether the inaccessible cardinal axiom is necessary for our construction. The required feature of an inaccessible cardinal $\ka$ is the closure property of the universe $\calv_\ka$ under the powerset operation. This is a necessary condition for the interpretation of inductive types. Following \cite{dybjer}, we use Aczel's rule sets to obtain a direct and concrete interpretation of induction and recursion rules.

The remainder of this paper is organized as follows. In Section \ref{cc}, we provide a formal presentation of CC with predicative induction and judgmental equality. Examples are presented to enable the reader to understand the syntax and typing rules. This section can be regarded as an introduction to the base theory of the proof assistant Coq. Indeed, the syntax we have provided is as close to Coq syntax as that used in practice, except for the judgmental equality and the restriction on predicative inductions.\footnote{We remark that many impredicative inductive types can be coded by impredicative definitions (cf. \cite{girard-proofs,coquand-meta,werner-set}).}

The difficulties in providing set-theoretic interpretations of impredicative or polymorphic types, subtypes, etc., are discussed in Section \ref{interpretation}. We use the computational information about the domains saved in the interpretation of $a : A$ to avoid these difficulties. This means that for the construction of set-theoretic models, type systems with judgmental equality are more explicit than systems without it. Using some typical examples, we explain the construction of a set-theoretic interpretation of inductive types and recursive functions. 

Finally, in Section \ref{soundness}, we prove the soundness of our interpretation. The proof itself is relatively simple, and it can also be used to verify the consistency of our system. This is because some types such as $\Pi (\al:\tprop).\al$ will be interpreted as the empty set; hence, they cannot be inhabited in the type system.

In Section \ref{conclusion}, we summarize the main results, and we discuss related work for future investigation.

\section{Formal presentation of CC with judgmental equality}\label{cc}

First, we provide the full presentation of the system, i.e.,  Coquand's $\cc$ with judgmental equality and predicative induction over infinitely many cumulative universes.

\subsection{Syntax}
We assume an infinite set of countably many variables, and we let $x, x_i, X, X_i,...$ vary over the variables. We also use special constants $\tprop$ and $\ttype_i$, $i\in\setn$. They are called \emph{sorts}. Sorts are usually denoted by $s, s_i, $ etc.\footnote{In this paper, we do not consider the sort $\tset$. Indeed, when (the impredicative or predicative sort) $\tset$ is placed at the lowest level in the hierarchy of sorts, as in the case of the current development of Coq, there is no way to provide a universal set-theoretic interpretation of both $\tset$ and $\tprop$, as identified by \cite{reynolds}. Note, however, that $\ttype_0$ in our system plays the role of the predicative $\tset$.}

\begin{defi}[Terms and contexts]
  The syntax of the objects is given as follows.
\[
\begin{array}{rcll}
t, t', t_i, A, A_i & ::= & x \mid s \mid \Pi x:t. t' \mid \la x:t. t' \mid \lett{x := t}{t'} \mid t\, t' & (\textit{terms})\\[.5ex]
  & & \mid \case{t,t',\vec{t}\,} \mid \ind_n\single{\De:=\De'}\cdot x  & \\[.5ex]
  & & \mid \fix{x_i}{x_0 / k_0: A_0:=t_0,\dots, x_n / k_n: A_n:= t_n} & \\[2ex]

\De, \De' & ::= & [\,] \mid \De, (x:t) & (\textit{declarations})\\[2ex]

\Ga, \Ga' & ::= & [\,] \mid \Ga, (x:t) \mid \Ga, (x:=t:t') \mid \Ga, \ind_n\single{\De:= \De'} & (\textit{contexts})
\end{array}
\]
Here, $[\,]$ denotes the empty sequence.
\end{defi}

\begin{defi}[Atomic terms]
\emph{Atomic terms} are either variables, sorts, or terms of the form $\indic{n}\cdot x$.
\end{defi}

\begin{defi}[Domain of contexts]
The domain of a context is defined as follows:
\[
\begin{array}{l}
\dom([\,]) :=  \varnothing, \qquad \dom(\Ga, \indic{n}) := \dom(\Ga), \\
\dom(\Ga, x=t:A) := \dom(\Ga, x:A) := \dom(\Ga)\cup \single{x}.
\end{array}
\]
\end{defi}

\begin{remm}\hfill
\begin{enumerate}[(1)]
\item Vector notations are used instead of some sequences of expressions:
\[
\begin{array}{|@{\,\,\,\bullet\,\,}l@{\qquad}@{\bullet\,\,}l|}\hline
\multicolumn{2}{|l|}{}\\[-2ex]
\vec t  :=  t_1, ..., t_n &  f\, \vec t := f\, t_1\cdots t_n \\[1ex]
\Pi \vec x : \vec A . t := \Pi x_1:A_1.\, ... \, \Pi x_n:A_n .\, t &
\la \vec x : \vec A . t  :=  \la x_1:A_1.\, ... \, \la x_n:A_n .\, t \\[1ex]
\multicolumn{2}{|l|}{\bullet\,\,\oto{x / k :A := t} :=  x_0 / k_0 : A_0 := t_0, ...  , x_n / k_n : A_n := t_n}\\[0.3ex]
 \hline
\end{array}
\]

\item Note that we use two subscript styles. One is of the form $t_1, ..., t_n$, and the other is of the form $t_0, ..., t_n$, where $n$ is a natural number. The latter style will be used only in the definition of mutually recursive functions, i.e., in combination with $\textup{\textsf{fix}}$.\smallskip

\item Given a sequence $\vec \ell$, let $\lh(\vec \ell\,)$ denote its length.\smallskip

\item In the examples presented below, character strings are used instead of single character variables in order to emphasize the correspondence with real Coq-expressions.\smallskip

\item Given a declaration $\De$ and a variable $x$, let $\De(x) = A$ when $A$ is the only term such that $x : A$ occurs in $\De$.\smallskip

\item There are standard definitions of the sets of free variables in a context or a term, and of the substitution $t[x\bs u]$, where $t,\, u$ are terms and $x$ a variable. Formal definitions are given in Appendix \ref{A-defi}.\smallskip

\item Given a sequence $\de = x_1:t_1,...,x_n:t_n$ and a term $t$, let $t\single{\de} := t[x_1\bs\, t_1]\cdots [x_n\bs\, t_n]$ denote consecutive substitution. 
On the other hand, the simultaneous substitution of terms $t_1,...,t_n$ for $x_1,...,x_n$, respectively, in $t$ is denoted by $t[\de] := t[x_1\bs\,t_1,...,x_n\bs\, t_n]$.

\end{enumerate}
\end{remm}

\noindent To enable the reader to understand the intended meaning of terms and contexts, we explain some notations with examples. The examples will also be used in Section \ref{interpretation} to explain our model.

\begin{remm}\label{rem-treeforest}
  The expression $\indic{n}$ denotes a \emph{(mutually) inductive type}, and the subscript $n$ denotes the number of \emph{parameters}. $\De_I$ and $\De_C$ are two declarations containing inductive types and their constructors, respectively. The \emph{Parameters} are binders shared by all the constructors of the definition, and they are used to construct polymorphic types.
The parameters differ from other non-parametric binders in that the conclusion of each type of constructor invokes the inductive type with the same parameter values as its specification.
We refer to Lemma \ref{ind-fam} and Lemma \ref{ind-para}, which show the difference between parameters and non-parametric binders.\medskip

The mutual definition of trees and forests can be represented, for instance, by $\cald_{TF} = \indic{1}$, where
\begin{eqnarray*}
  \De_I & = & \ttt{tree} : \ttype_0\to\ttype_0,\, \ttt{forest} : \ttype_0\to\ttype_0\,,\\
  \De_C & = & \ttt{node} : \Pi (A:\ttype_0).\, A \to \ttt{forest}\, A \to \ttt{tree}\, A,\\
  & & \ttt{emptyf} : \Pi (A:\ttype_0).\,\ttt{forest}\,A,\\
  & & \ttt{consf} : \Pi (A:\ttype_0).\,\ttt{tree}\,A\to\ttt{forest}\,A\to \ttt{forest}\,A\,.
\end{eqnarray*}
The subscript $1$ implies that $(\texttt{\em A} : \ttype)$ is a parameter.
\medskip
\end{remm}

\begin{remm}\label{rem-nat}
If $\cald = \indic{n}$ and $x \in \dom(\De_I, \De_C)$, then $\cald\cdot x$ corresponds to the names of defined inductive types or their constructors.\medskip

The type for natural numbers and its two constructors can be represented by 
$\cald_N\cdot \tnat, \quad \cald_N\cdot \ttt{O}$, and $\cald_N\cdot \ttt{S}$,
respectively, where $\cald_N = \indic{0}, \De_I = \ttt{nat} : \ttt{Type}_0$, and $\De_C = \ttt{0}:\ttt{nat}, \ttt{S}:\ttt{nat} \to \ttt{nat}$\,. \medskip

In the examples presented below, however, we use character strings for better readability. Thus, for example, \ttt{nat}, \ttt{0}, and \ttt{S} are used instead of $\cald_N\cdot \tnat, \cald_N\cdot \ttt{O}$, and $\cald_N\cdot \ttt{S}$, respectively.
\end{remm}

\begin{remm}[\textsf{case} and \textsf{fix}]\label{size}
The term $\case{e,Q,\vec{h}\,}$ corresponds to the following Coq-expression
\begin{center}
\ttt{match $e$ as $y$ in $I\,\vec{\uscore}\,\vec u$ return $Q\, \vec u \, y$ with $\dots \mid C_i \, \vec{p}\, \vec v$ => $h_i \mid \dots$ end}
\end{center}
where
\begin{iteMize}{$\bullet$}
\item the term $e$ is of an inductive type $I\,\vec p\, \vec u$ for some terms $\vec p, \vec u$,
\item $\lh(\vec p) = \lh(\vec{\uscore})$,
\item $Q = \la\vec u : \vec U.\,\la y:I\,\vec p\,\vec u.\, Q'$ for some terms $\vec U, Q'$, 
\item the term $y$ is a fresh variable bound in $Q \, \vec u\, y$,
\item each $C_i$ is a constructor of type $\Pi\vec p : \vec P. \Pi \vec v : \vec V_i. I\, \vec p\,\vec w_i$ for some terms $\vec P, \vec{V_i}, \vec{w_i}$, and
\item each $h_i = \la \vec v : \vec V_i .h'_i$ for some term $h'_i$\,.
\end{iteMize}

The term $\fix{f_i}{\oto{f/k:A:=t}}$ denotes the $(i+1)$th function defined by a mutual recursion. The number $k_i$ denotes the position of the inductive binder on which recursion is performed for $f_i$. It corresponds to Coq's \ttt{struct} annotation used for the \equote{\emph{guarded}} condition in the termination check (cf. \cite{gimenez}).

\begin{enumerate}[(1)]
\item The addition function \ttt{plus} can be defined as follows:
\begin{center}
$\ttt{plus}=\fix{f} {f/ 1:\Pi (m, n:\ttt{nat}).\,\ttt{nat}:=\la (m, n:\ttt{nat}).\, \case{n,Q, h_0, h_1 }}$\,,
\end{center}
where $Q=\la (\ell:\ttt{nat}).\, \ttt{nat}$, $h_0 = m$, and $h_1 = \la (p:\ttt{nat}).\, \ttt{S}\, (f\, m\, p)$\,. 

\item The functions for measuring the size of trees and forests can be represented by 
$\ttt{Tsize} = \fix{g_0}{R}$ and $\ttt{Fsize} = \fix{g_1}{R}$,
 where $R=\oto{g/k:B := t}$, $k_0 = k_1 = 1$, and
\begin{eqnarray*}
B_0 & = & \Pi (A:\ttype_0).\, \Pi (t:\ttt{tree} \, A).\, \tnat\,, \\
B_1 & = & \Pi (A:\ttype_0).\, \Pi (f:\ttt{forest} \, A).\, \tnat\,, \\[1ex]
t_0 & = & \la (A:\ttype_0).\, \la (t:\ttt{tree} \, A).\, \case{t, Q_0, h_0}\,, \\
t_1 & = & \la (A:\ttype_0).\, \la (f:\ttt{forest} \, A).\, \case{f, Q_1, h_1, h_2}\,,\\
Q_0  &= & \la (t: \ttt{tree} \, A).\, \tnat\,,\\
Q_1  & = &  \la (f : \ttt{forest}\, A).\, \tnat\,,\\
h_0  & = &  \la (a:A). \la(f: \ttt{forest} \, A).\, \ttt{S}\, (g_1\, A \, f)\,,\\
h_1  & = &  \ttt{O}\,,\\
h_2  & = &  \la (t: \ttt{tree } A).\, \la (f:\ttt{forest } A).\ttt{plus}\,(g_0\, A\, t)\,(g_1\, A \, f).
\end{eqnarray*}
\end{enumerate}
\end{remm}

\subsection{Typing rules}
  The typing judgment $\Ga\edash M:A$ or $\Ga\edash M = N:A$ is defined simultaneously with the property $\wf(\Ga)$ of a \emph{well-formed valid context}  and the property $\Ga\vd M \prec N$ of cumulativity of types in Figures \ref{fig:basic} $\sim$ \ref{fig:constraint}. We provide short explanations of some rules. For a more detailed explanation, refer to \cite{coqman}, \cite{letouzey}, or \cite{pm-habili}.

\subsection*{Typing rules for basic terms and valid contexts (Figure \ref{fig:basic})}

\begin{figure}[h]
\begin{center}
\noindent \begin{tabular}{|p{14.5cm}|}\hline\\
 \hfill $\wf(\varnothing)$
 \hfill \rbm{\infer{\wf(\Ga,(x:A))}{\Ga\edash A:s & x\notin \dom(\Ga)}}
 \hfill \rbm{\infer{\wf(\Ga, (x:=t:A))}{\Ga\edash t:A & x\notin\dom (\Ga)}}
\hfill $(wf)$\\[5ex]

 \hfill \rbm{\infer{\Ga\edash \tprop:\ttype_i}{\wf(\Ga)}}
 \hfill \rbm{\infer{\Ga\edash \ttype_i:\ttype_j}{\wf(\Ga) & i<j}}
 \hfill $(ax)$\\[5ex]

 \hfill \rbm{\infer{\Ga\edash x:A}
   {\wf(\Ga) & (x:A)\in \Ga & \text{or} & (x:=t:A)\in \Ga} }
\hfill $(var)$\\[5ex]

 \hfill \rbm{\infer{\Ga\edash \lett{x:=t}{u}:U[x\bs t]}
   {\Ga, (x:=t :A)\edash u:U} }
\hfill $(let)$\\[5ex]

 \hfill \rbm{\infer{\Ga\edash (\lett{x:=t}{u})=(\lett{x:=t'}{u'}):U[x\bs t]}
   {\Ga\edash t=t':A & \Ga, (x:=t :A)\edash u=u':U} }
\hfill $(let\tminus eq)$\\[5ex]



\hfill \rbm{
  \infer
  {\Ga\edash \Pi x:A.B:s_3}
  {\Ga\edash A:s_1 & \Ga,x:A\edash B:s_2 & \calp(s_1,s_2,s_3)}
}\hfill $(\Pi)$\\[5ex]
 
 \hfill \rbm{\infer{\Ga\edash \Pi x:A.B= \Pi x:A'.B':s_3}
   {\Ga\edash A= A':s_1 & \Ga,x:A\edash B= B':s_2 & \calp(s_1,s_2,s_3)} }
\hfill $(\Pi\tminus eq)$\\[5ex]

\hfill \rbm{
  \infer
  {\Ga\edash \la x:A.M:\Pi x:A.B}
  {\Ga, x:A\edash M:B & \Ga\edash \Pi x:A.B:s}    
 } 
\hfill $(\la)$\\[5ex]


\hfill \rbm{
  \infer
  {\Ga\edash \la x:A.M= \la x:A'.M':\Pi x:A.B}
  { \Ga\edash A= A':s\quad \Ga, x:A\edash M= M':B \quad \Ga\edash \Pi x:A.B:s'}
}
\hfill $(\la\tminus eq)$\\[5ex]


\hfill \rbm{\infer{\Ga\edash MN: B[x\bs N]}
   {\Ga\edash M:\Pi x:A.B & \Ga\edash N:A} } 
\hfill $(app)$\\[5ex]

 \hfill \rbm{\infer{\Ga\edash MN= M'N': B[x\bs N]}
   {\Ga\edash M= M':\Pi x:A.B & \Ga\edash N= N':A} }
\hfill $(app\tminus eq)$\\[2ex]
\hline
\end{tabular}
\end{center}
\caption{Basic terms and valid contexts}
\label{fig:basic}
\end{figure}

Typing rules for standard constructions of $\la$- and $\Pi$-terms are given.\medskip

\noindent $(wf)$: Well-formed contexts contain well-typed terms, and they can be extended by well-typed inductive types, as in rule $(ind\tminus wf)$ of Figure \ref{fig:inductive}.\medskip

\noindent $(\Pi)$ and $(\Pi\tminus eq)$: $\calp(s_1,s_2,s_3)$ implies that
\begin{iteMize}{$\bullet$}
\item $s_2=s_3=\tprop$, or
\item $s_1\in\ttype_i,\, s_2=\ttype_j$ and $s_3=\ttype_k$ where $k\ge \max\{i,j\}$.
\end{iteMize}

\subsection*{Typing rules for inductive types and recursive functions (Figure \ref{fig:inductive})}

\begin{figure}[h]
\begin{center}
\begin{tabular}{|p{14.5cm}|}\hline\\[6ex]
  \hfill \rbm{
    \infer
    {\wf(\Ga,\indic{n})} 
    {\deduce
      {\Ga,\De_I\edash T:s_c\quad\text{for all $(c:T)\in \De_C$}}
      {\deduce
        {\Ga\edash A:s_d\quad\text{for all }(d:A)\in \De_I}
        {\cali_n(\De_I, \De_C) }
      }
    }
  }
\hfill $(ind\tminus wf)$\\[5ex]

\hfill \rbm{
  \infer
  {\Ga\edash \cald\cdot d : \De_I(d)[.\bs \cald]} 
  {\wf(\Ga) & \cald = \indic{n} \in \Ga & d\in\dom(\De_I)}
}
\hfill $(ind\tminus type)$\\[5ex]

\hfill \rbm{
  \infer
  {\Ga\edash \cald\cdot c : \De_C(c)[.\bs\cald]} 
  {\wf(\Ga) & \cald = \indic{n} \in \Ga & c\in\dom(\De_C)}
}
\hfill $(ind\tminus const)$\\[11.5ex]

\hfill \rbm{
  \infer
  {\Ga\edash\case{e,Q,(h_k)_{k}}: Q\, \vec u\, e}
  {\deduce
    {\Ga\edash h_k : \Pi\vec v : \vec V_k.\, Q\, \vec w_k\,(c_k\,\vec p\, \vec v) \text{ for all } (c_k : \Pi\vec p : \vec P.\, \Pi\vec v:\vec V_k.\, d_i\, \vec p\, \vec w_k)\in \De_C}
    {\deduce
      {\Ga\edash Q:B \quad \calc(d_i\,\vec{p}:A ; B) \quad \Ga\edash e : d_i \,\vec{p}\,\vec{u}}
      {\indic{n}\in\Ga & (d_i:\Pi \vec p : \vec P.\, A)\in \De_I & \lh(\vec p)=n }
    }
  }
}
\hfill $(case)$\\[11.5ex]

\hfill \rbm{
  \infer
  {\Ga\edash \case{e,Q,(h_k)_{k}}=\case{e',Q',(h'_k)_{k}}: Q\, \vec u\, e}
  {\deduce
    {\Ga\edash h_k=h'_k : \Pi\vec v : \vec V_k.\, Q\, \vec{w_k}\,(c_k\,\vec p\, \vec{v}) \text{ for all } (c_k : \Pi\vec p : \vec P.\, \Pi\vec v : \vec V_k.\, d_i\, \vec p\, \vec w_k)\in \De_C}
    {\deduce
      { \Ga\edash Q=Q':B \quad \calc(d_i\,\vec{p}:A ; B) \quad \Ga\edash e=e' : d_i \,\vec{p}\,\vec{u} }  
      {\indic{n}\in \Ga & (d_i:\Pi \vec p : \vec P.\, A)\in \De_I  & \lh(\vec p)=n }
    }
  }
}
\hfill $(case\tminus eq)$\\[5.5ex]

\hfill \rbmm{
  \infer
  {\Ga\edash \fix{f_j}{\oto{f / k : A := t}} : A_j}
  {\calf(\vec f, \vec A, \vec k, \vec t) & n=\lh(\vec k) & (\Ga\edash A_i: s_i)_{\forall i\le n}\quad (\Ga, \vec f : \vec A \edash t_i : A_i)_{\forall i\le n} & j\le n}
}
\hfill $(fix)$\\[10ex]

\hfill \rbmm{
  \infer
  {\Ga\edash \fix{f_j}{\oto{f / k : A := t}} = \fix{f_j}{\oto{f / k : A' := t'}} : A_j }
  {\deduce
    {(\Ga\edash A_i = A'_i: s_i)_{\forall i\le n}\quad (\Ga, \vec f : \vec A \edash t_i = t'_i: A_i)_{\forall i\le n}\quad j\le n} 
    {\calf(\vec f, \vec A, \vec k, \vec t\,) & \calf(\vec f, \vec{A'}, \vec k, \vec{t'}) & n=\lh(\vec k) }
  }
}
\hfill $(fix\tminus eq)$\\[3.5ex]
\hline
\end{tabular}
\end{center}
\caption{Inductive types and recursive functions}
\label{fig:inductive}
\end{figure}

\begin{figure}
\begin{center}
\begin{tabular}{|p{14.5cm}|}\hline\\
 \hfill \rbm{\infer{\Ga\edash M= M:A}{\Ga\edash M:A}}\hfill \rbm{\infer{\Ga\edash N= M:A}{\Ga\edash M=N:A}}\hfill $(ref)(sym)$\\[5ex]

 \hfill \rbm{\infer{\Ga\edash M= P:A}{\Ga\edash M= N:A & \Ga\edash N= P:A}}\hfill $(trans)$\\[5ex]

 \hfill \rbm{\infer{\Ga\edash M:B}{\Ga\edash M:A & \Ga\edash A= B:s}}\hfill \rbm{\infer{\Ga\edash M= N:B}{\Ga\edash M= N:A & \Ga\edash A= B:s}}\hfill $(conv)(conv\tminus eq)$\\[5ex]

 \hfill \rbm{
   \infer
   {\Ga\edash (\la x:A.M)N= M[x\bs N]: B[x\bs N]}
   {\Ga,x:A\edash M:B & \Ga\edash \Pi x:A.B:s & \Ga\edash N:A}
 }
 \hfill $(\be)$\\[5ex]


\hfill \rbm{\infer{\Ga\edash x = t:A}
{\wf(\Ga) & (x:=t:A)\in \Ga} }
\hfill $(\de)$\\[5ex]

\hfill \rbm{\infer{\Ga\edash (\lett{x:=t}{u}) = u[x\bs t]:U[x\bs t] }
{\Ga\edash t:A & \Ga, (x:=t :A)\edash u:U } }
\hfill $(\zeta)$\\[11ex]

\hfill \rbm{
  \infer
  {\Ga\edash\case{c_j\, \vec p\, \vec a, Q,(h_k)_{k}} = h_j\, \vec a: Q\, \vec{u}\, (c_j\, \vec p\, \vec a)}
  {\deduce
    {\Ga\edash h_k : \Pi\vec v : \vec V_k.\, Q\, \vec{w_k}\,(c_k\,\vec p\, \vec{v}) \text{ for all } (c_k : \Pi\vec p : \vec P.\, \Pi\vec v : \vec V_k.\, d_i\, \vec p\, \vec w_k)\in \De_C}
    {\deduce
      { \Ga\edash Q:B \quad \calc(d_i\,\vec{p}:A ; B) \quad \Ga\edash c_j\, \vec p\, \vec a : d_i \,\vec{p}\,\vec{u} }
      {\indic{n}\in\Ga & (d_i:\Pi \vec p : \vec P.\, A)\in \De_I & \lh(\vec p)=n }
    }
  }
}
\hfill $(\iota)$\\[9ex]

\hfill \rbm{
  \infer
  {\Ga\edash (\fix{f_j}{R} )\, \vec a = (t_j[f_i\bs (\fix{f_i}{R} )] \,\vec a) : A'_j\single{\vec x : \vec a} }
  {\deduce
    {R\equiv\oto{f / k : A := t}\quad A_j\equiv \Pi \vec x_j : \vec B_j.\, A'_j \quad \Ga\edash \vec a : \vec B_j\quad \lh(\vec B_j)=k_j+1 }
    {(\Ga\edash A_i: s_i)_{\forall i\le n}\quad (\Ga, \vec f : \vec A\edash t_i : A_i)_{\forall i\le n} & \calf(\vec f, \vec A, \vec k, \vec t) & j\le n} } }
\hfill $(\iota)$\\[7ex] 

\hfill  $\edash \tprop\prec \ttype_0$
\hfill $\edash \ttype_j\prec \ttype_{j+1}$\hfill ${(inc)}$ \\[2.5ex]

 \hfill \rbm{\infer{\Ga\edash M\prec P}{\Ga\edash M \prec N & \Ga\edash N\prec P}}\hfill $(trans\tminus inc)$ \\[5ex]
 
 \hfill \rbm{\infer{\Ga, x:C\edash A\prec B}{\Ga\edash A\prec B & \Ga\edash C:s & x\notin \dom(\Ga)}}\hfill $(weak\tminus inc)$\\[5ex]

 \hfill \rbm{\infer{\Ga\edash \Pi x:A_1.A_2\prec \Pi x:B_1.B_2}{\Ga\edash A_1 =B_1: s & \Ga, x:A_1\edash A_2\prec B_2}}\hfill $(\Pi\tminus inc)$ \\[5ex]

 \hfill \rbm{\infer{\Ga\edash N\prec P}{\Ga\edash M = N:s & \Ga\edash M\prec P}}\hfill \rbm{\infer{\Ga\edash P\prec N}
{\Ga\edash M = N:s & \Ga\edash P\prec M}}\hfill $(eq\tminus inc)$ \\[5ex]

 \hfill \rbm{\infer{\Ga\edash M:B}
{\Ga\edash M:A & \Ga\edash A\prec B} }
\hfill \rbm{\infer{\Ga\edash M= N:B}
{\Ga\edash M= N:A & \Ga\edash A\prec B } }
\hfill $(cum)(cum\tminus eq)$\\[2ex]
\hline
\end{tabular}
\end{center}
\caption{Judgmental equality and cumulative type universes}
\label{fig:equality}
\end{figure}

Typing rules for (mutually) inductive types, case distinctions, and (mutually) recursive functions are given.

\noindent $(ind\tminus wf)$: The positivity condition is crucial for defining an inductive type. A term $A$ is an \emph{arity ending in sort} $s$, $\arity(A,s)$, if it is convertible to $s$ or a product $\Pi x:A.B$, where $B$ is an arity ending in sort $s$. $A$ is called an \emph{arity}, $\arity(A)$, if $A$ is an arity ending in sort $s$ for some sort $s$.

A term $M$ satisfies the \emph{positivity condition} for a variable $x$ when
$M = \Pi \vec y : \vec A\,.\, x\,\vec u$ for some terms $\vec A, \vec u$ and
%
the variable $x$ occurs \emph{strictly positively} in $\vec A$.
A variable $x$ occurs \emph{strictly positively} in $M$ when
\begin{iteMize}{$\bullet$}
\item $x$ does not occur in $M$, or
\item $M\equiv \pifn{\vec y : \vec A}{(x\,\vec B)}$ and $x$ does not occur in $\vec A,\, \vec B$.\medskip
\end{iteMize}

Now, $\cali_n(\De_I, \De_C)$ represents the following conditions:
\begin{iteMize}{$\bullet$}
\item All the names contained in the domains of $\De_I$ and $\De_C$ must be mutually distinct and new.
    
\item All the types of $\De_I$ and $\De_C$ start with the same $n$ products, say, $\vec p : \vec P$.

\item Any occurrence of some $d \in \dom(\De_I)$ in $\De_C$ is of the form $(d\, \vec p\, \vec u)$, which is not applicable any more.
    
\item For all $d:A\in \De_I$, $A$ is an arity ending in sort $s_d$ such that $s_d \neq \tprop$. Thus, we do not use inductive definitions of type $\tprop$. Some propositions defined inductively can be constructed using an impredicative coding. See \cite{girard-proofs,coquand-meta}, and \cite{werner-set} for further details.
    
\item For all $c:T\in \De_C$, $T$ is the type of a constructor for an inductive type $d\in \dom(\De_I)$, i.e., $T$ is of the form $\Pi \vec p : \vec P.\,\Pi \vec z : \vec Z.\, (d\,\vec p\, \vec u)$. In this case, the sort $s_c$ in the third premise of the rule must be $s_d$.
    
\item $T$ satisfies the positivity condition for all $x \in \dom(\De_I)$.\smallskip
\end{iteMize}

\noindent\textbf{Notation.}
We use $\Ga \vd \indic{n}$ when all the premises of $(ind\tminus wf)$ are satisfied.\bigskip
    
\noindent $(ind\tminus type)$ and $(ind\tminus const)$: Given $\cald = \indic{n}$ and a term $A$, $A[.\bs\cald]$ implies that every occurrence of $z \in \dom(\De_I,\De_C)$ in $A$ is replaced with $\cald\cdot z$.\medskip

\noindent $(case)$ and $(case\tminus eq)$: $d_i$ and $c_k$ denote $\indic{n}\cdot d_i$ and $\indic{n}\cdot c_k$, respectively. Furthermore, $\lh(\vec u) = \lh(\vec w_k)$.\medskip

For an inductive type $d$ and an arity $B$, the relation $\calc(d\,\vec q :A ; B)$ is defined as follows:
    \begin{iteMize}{$\bullet$}
    \item $\calc(d\,\vec q:\tprop; d\,\vec q \to \tprop)$;
   \item $\calc(d\,\vec q : \tprop; d\,\vec q \to \ttype_j)$ iff $d$ is an inductive type that is empty or has only one constructor\footnote{This reflects the fact that no pattern matching is allowed on proof-terms, which would otherwise result in a paradox, as shown by \cite{coquand-meta}.}
    such that all the non-parametric arguments are of sort $\tprop$;
    \item $\calc(d\,\vec q : \ttype_j ; d\,\vec q\to s)$ for any sort $s$;
    \item $\calc(d\,\vec q : (\pifn{u:U}{A}) ; (\pifn{u:U}{B}))$ iff $\calc(d\,\vec q\, u:A;B)$.
    \end{iteMize}
This means that an object of the inductive type $d$ can be eliminated for proving a property $P$ of type $B$. Let $\calc(d\,\vec q ; B)$ denote $\calc(d\,\vec q : A ; B)$, where $A$ is the type of $d\,\vec q$.\medskip

\noindent $(fix)$ and $(fix\tminus eq)$:
$\calf(\vec f, \vec A, \vec k, \vec t)$ represents the following conditions:
\begin{iteMize}{$\bullet$}
\item $\lh(\vec f)=\lh(\vec A)=\lh(\vec k)=\lh(\vec t)$,


\item for each $t_i\in \vec t$, there is an inductive type $\cald\cdot d$, where $\cald=\indic{n}$, and a term $T_i$ such that 
  \begin{iteMize}{$-$}
  \item $t_i=\la \vec y : \vec Y.\,\la z:(\cald\cdot d)\,\vec u.\,t'_i$, where $\lh(\vec Y) = k_i$, and
    
  \item there is a \emph{constrained derivation} with respect to $z$ and $\De_I, \De_C$ such that
    \[
    \Ga^\epsilon, (\vec y : \vec Y)^\epsilon, z:^{=z} (\cald\cdot d)\,\vec u,(\vec f : \vec A)^{<\!z}_{\vec k} \edash t'_i:^\epsilon T_i^\epsilon
    \]
    where $(\vec f : \vec A)^{<\!z}_{\vec k}$ is the context composed of
    \[
    f_j:^\epsilon \Pi (\vec u : \vec B_j)^\epsilon . \Pi v:^{<\!z} X_j\,.\, P_j^\epsilon
    \]
    if $f_j: \Pi (\vec u : \vec B_j) . \Pi v: X_j\,.\, P_j$ is from  $\vec f : \vec A$ and $\lh(B_j) = k_j$.\medskip
  \end{iteMize}
  
  The condition for \emph{constrained derivation} ensures that the constructed terms are normalizing terms. A formal definition is given in Appendix \ref{constraint}. Informally, it means that $t_i$ can only contain decreasing recursive calls: if $f_j$ appears in $t_i$, then it must have at least $k_j+1$ arguments, and its $(k_j+1)$th argument must be structurally smaller than the initial inductive argument $z$ $($Thus, any subterm of an inductive term obtained by going through at least one constructor is structurally smaller than the initial term.$)$.
\end{iteMize}

\subsection*{Judgmental equality and type universes (Figure \ref{fig:equality})}
The rules in Figure \ref{fig:equality} stipulate that the judgmental equality based on reductions is an equivalence relation. De Bruijn's {\it telescope} notation is very useful: $\Ga \vd \vec t : \vec A$ with $\lh(t) = n$ implies that
\begin{iteMize}{$\bullet$}
\item $\Ga, x_1:A_1, ..., x_{j-1}:A_{j-1} \vd A_j : s_j$ for all $j\in \single{1, ..., n}$, and

\item $\Ga \vd t_j : A_j[x_1\bs t_1] \cdots [x_{j-1} \bs t_{j-1}]$ for all $j\in \single{1, ..., n}$.
\end{iteMize}

  
\section{Set-theoretic model construction}\label{interpretation}

\subsection{Background}
We must resolve a dilemma related to the construction of a set-theoretic model of CC and its extensions. In a proof-irrelevant model, each type expression should have an obvious set-theoretic interpretation; however, it is well known that impredicative or polymorphic types, such as $\tprop$, can only have a trivial set-theoretic interpretation, as shown by \cite{reynolds}. Hence, it is necessary to assign a singleton or the empty set to each term of type $\tprop$. 

In constructing a set-theoretic model of Coquand's CC, \cite{miwe} provided the following solution. Under the assumption of the existence of a urelement $\bullet$ that does not belong to the standard universe of set theory, the sort $\tprop$ is associated with $\single{\varnothing, \single{\bullet}}$. Furthermore, the application and $\la$-abstraction terms are interpreted by $\app$ and $\lam$, respectively, which are defined as follows:
\begin{eqnarray*}
  \app(u,x) & := &
  \begin{cases}
    \bullet & \text{if } u = \bullet\,, \\
    u(x) & \text{otherwise.}
  \end{cases}\\
  \lam(f) & := &
  \begin{cases}
    \bullet & \text{if } f(x) = \bullet \text{ for all } x\in \dom(f)\,, \\
    f & \text{otherwise.}
  \end{cases}\\
\end{eqnarray*}

\begin{rem}\label{counter-subtyping}
This construction does not correctly model the cumulative relation between $\tprop$ and $\ttype_i$, as demonstrated in the following example; \cite{werner-proof} showed that it can be easily extended to the cumulative type universes when the subtyping relation between $\tprop$ and $\ttype_i$ does not exist\smallskip

Consider $I = \la (A:\ttype_0).\, A \to A$. Then, its type of set-theoretic interpretation is not deterministic. Suppose that $P$ is a true proposition. Then, $\lrbk{I\, P}$ depends on the type we have assigned to $P$, that is, $\tprop$ or $\ttype_0$. In the former case, $\lrbk{I\, P} = \single{\bullet}$ since $P \to P $ is a tautology, whereas in the latter case, $\lrbk{I \, P} = \lrbk{I}(\lrbk{P}) = \menge{f}{f : \single{\bullet} \to \single{\bullet}} \neq \single{\bullet}$ since $\lrbk{P} = \single{\bullet}$. 
\end{rem}

Another solution was provided by \cite{aczel-relating}. He used the trace encoding of functions in order to provide an adequate interpretation of the impredicative type $\tprop$ of propositions and its relationship with $\ttype_i$. For this reason, we adopt Aczel's solution.

\begin{defi}[Trace encoding of set-theoretic functions]
Let $u, x, f$ denote sets. Then,
\begin{eqnarray*}
  \app(u,x) & := & \menge{z}{(x,z)\in u }, \\
  \lam(f) & := & \bigcup_{(x,y)\in f}(\{x\} \times y) \,\,\,\, =\,\,\, \bigcup_{(x,y)\in f} \menge{(x,z)}{z\in y}.
\end{eqnarray*}
\end{defi}

Note that for any function $f$ and any $x\in \dom(f)$, we have
\[
\app(\lam(f),x)
= \menge{y}{(x,y)\in \lam(f)}
=  \menge{y}{y\in f(x)}
= f(x)\,.
\]

\noindent\textbf{Notations.}
\begin{enumerate}[(1)]
\item In the remainder of this paper, $\downarrow$ is used if something is well defined, and $\uparrow$ is used otherwise.
\item Given sets $A, B(x)$, $x \in A$, let $\prod_{x \in A} B(x)$ denote the set of all functions $f$ such that $\dom(f) = A$ and $f(x) \in B(x)$ for all $x \in A$.

\item Given a function $f \in \prod_{x_1\in A_1} \cdots \prod_{x_n \in A_n(x_1,...,x_{n-1})}\, B(x_1,...,x_n)$, we use the notation $\vec\lam_n (f)$ (resp. and $\vec{\app}(f, \vec x)$) for the $n$-times application of $\lam$ (resp. $\app$):
\begin{eqnarray*}
\vec\lam_n (f) & := & \menge{(x_1,...,x_n,y)}{x_1 \in A_1, ..., x_{n} \in A_{n}(x_1, ..., x_{n-1}), y \in f(x_1, ..., x_n)}\\
\vec\app_n (f, \vec x\, ) & := & \app(... (\app(\app(f, x_1), x_2),...), x_n)
\end{eqnarray*}
\end{enumerate}

We suppress the subscript $n$ when the number of times we want to apply $\lam$ or $\app$ is obvious from the context. Note that $\vec\lam_0(f) = f$ and $\vec\app_0(f, nil) = f$.
\begin{lem}[\cite{aczel-relating}]\label{A-poly}
Given a set $A$, assume $B(x)\tm 1$ for all $x \in A$.
\begin{enumerate}[\em(1)]
\item $\menge{\lam(f)}{f \in \prod_{x\in A} B(x)}\tm 1$.
\item $\menge{\lam(f)}{f \in \prod_{x\in A} B(x)} = 1$ iff\, $\forall x\in A\, (B(x)=1)$.
\end{enumerate}
\end{lem}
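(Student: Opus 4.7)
The plan is to exploit the very restrictive hypothesis $B(x)\subseteq 1$, which forces each $B(x)$ to be either $\varnothing$ or $1=\{\varnothing\}$, and then to unwind the trace encoding of $\lam$ in the two resulting cases.

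For part (1), I would split on whether the product $\prod_{x\in A} B(x)$ is inhabited. If some $B(x_0)=\varnothing$, then the product is empty and so is the displayed set; hence trivially $\subseteq 1$. Otherwise every $B(x)$ equals $\{\varnothing\}$, which means the product contains exactly one function, namely $f_0 := \{(x,\varnothing) : x\in A\}$ (and when $A=\varnothing$ the unique such $f_0$ is the empty function). In either of the latter sub-cases the computation
\[
\lam(f_0) \;=\; \bigcup_{(x,y)\in f_0}\{x\}\times y \;=\; \bigcup_{x\in A}\{x\}\times\varnothing \;=\; \varnothing
\]
shows that the displayed set is $\{\varnothing\}=1$. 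Thus in every case the set is a subset of $1$.

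For part (2), the analysis above makes both directions immediate. The displayed set equals $1$ iff it is nonempty, iff $\prod_{x\in A} B(x)$ is nonempty, iff no $B(x)$ is empty, iff $B(x)=1$ for all $x\in A$ (using $B(x)\subseteq 1$).

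There is essentially no obstacle here: the statement is a routine unpacking of the definitions, and the only thing to watch is the degenerate case $A=\varnothing$, where the product contains the empty function and $\lam$ of the empty function is $\varnothing\in 1$, so the lemma continues to hold. The conceptual content is just that the trace encoding collapses all functions into subsingletons precisely when the codomains are themselves subsingletons, which is exactly why Aczel's encoding behaves well on the impredicative sort $\tprop$.
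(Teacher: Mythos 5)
Your proof is correct and follows essentially the same route as the paper: case-split on whether some $B(x)$ is empty, observe that otherwise the product contains the single function $\menge{(x,\varnothing)}{x\in A}$ whose $\lam$-image is $\varnothing$, and read off both parts. Your explicit treatment of the degenerate case $A=\varnothing$ is a small bonus the paper leaves implicit.
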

\begin{proof}
Let $f\in \prod_{x\in A} B(x)$, i.e.,
  \[
  f =
  \begin{cases}
    \uparrow & \text{ if } \exists x\in A\, (B(x)=\varnothing), \\
    \menge{(x,\varnothing)}{x\in A}  & \text{ otherwise.}
  \end{cases}
  \]
  Then, we have
  \[
  \lam(f) =
  \begin{cases}
    \uparrow & \text{ if } \exists x\in A\, (B(x)=\varnothing),\\
    \varnothing   & \text{ otherwise.}
  \end{cases}
  \]
  This also implies that $\menge{\lam(f)}{f\in \prod_{x\in A} B(x)} = 1$ iff $\forall x\in A\, (B(x) = 1)$.
\end{proof}

\begin{rem}
A useful feature of trace encoding is that $\app(u,a)$ and $\lam(f)$ are always defined for any sets $u, a, f$, including the empty set. This, however, implies that we sometimes lose the information of the domain of a given function $f$, i.e., we cannot trace back to $\dom(f)$ starting from $\lam(f)$. We will see that the use of judgmental equality enables us to avoid such a loss when only well-typed terms are involved.
\end{rem}

\subsection{Inductive types and rule sets}

Here, we follow the approaches of \cite{aczel-relating} and \cite{dybjer} for the construction of a set-theoretic interpretation of inductive types. We are particularly interested in \emph{rule sets}.

We are going to work on the basis of ZF set theory with an axiom guaranteeing the existence of countably many (strongly) inaccessible cardinals. Note that such an axiom is independent of ZFC. \cite{werner-set} showed that such an axiom is sufficient for a set-theoretic interpretation of the cumulative type universes and predicatively inductive types.
However, it is not clear whether this axiom is necessary for our construction. Indeed, the required feature of an inaccessible cardinal $\ka$ is the closure property of the universe $\calv_\ka$ under the powerset operation. This is a necessary condition for the interpretation of inductive types.

Henceforth, assume that there are countably many (strongly) inaccessible cardinals. Let $\ka_0=\omega$ and $\ka_1, \ka_2$,  ... enumerate these inaccessible cardinals. We associate each sort $\ttype_i$ with its $rank(\ttype_i) := \ka_i$. If $(\calv_\al)_{\al \in Ord}$ denotes the (standard) universe of sets defined as follows, then $\calv_\ka$ is a model of ZF:
\[
\calv_0 :=  \varnothing \quad \text{and} \quad
\calv_\al := \bigcup_{\be \in \al} \calp(\calv_\be) \quad \text{if } \al > 0
\]
$Ord$ denotes the class of all ordinals, $\la$ denotes a limit ordinal, and $\calp$ denotes the power set operator. In particular, if $\ka$ is an inaccessible cardinal, $A \in \calv_\ka$, and for every $a \in A$, $B_a \in \calv_\ka$, then, $\prod_{a \in A} B_a \in \calv_\ka$. Let $rank(\tprop) := -1$ and $\calv_{{-1}} = \single{0,1}$ for convenience.  Refer to \cite{drake} for further details about inaccessible cardinals.

A \emph{rule} on a base set $U$ is a pair of sets $\cons{u,v}$, often written as $\frac u v$, such that $u\tm U$ and $v\in U$. A set of rules on $U$ is called a \emph{rule set} on $U$. Given a rule set $\Phi$ on $U$, a set $w\tm U$ is $\Phi$-closed if for any $\frac u v \in \Phi$, $v\in w$ whenever $u\tm w$. Note that there is the least $\Phi$-closed set
\[
\cali(\Phi):=\bigcap\menge{w\tm U}{w\,\,\,\text{$\Phi$-closed}}\,.
\]
In fact, it is well known that each rule set $\Phi$ on $U$ generates a monotone operator on $\calp(U)$
\[
\Ga_\Phi(X) := \menge{v\in U}{\text{there exists some $u\tm X$ such that $\frac u v\in \Phi$}}
\]
such that $\cali(\Phi)$ is the least fixed point of $\Ga_\Phi$. Assuming that $\Phi$ is a rule set on $I\times U$, $\Phi$ defines a family $\calif(\Phi)$ of sets in $U$ over $I$ as
\[
\calif(\Phi)(i) :=\menge{u\in U}{\cons{i,u}\in \cali(\Phi)}
\]
for each $i\in I$. 

A rule set is \emph{deterministic} provided that it contains at most one rule with a given conclusion. The rule sets  defined below by an inductive definition are deterministic. This makes it possible to interpret functions defined by structural recursion on a certain inductive type as set-theoretic functions. The interpretations are defined on the corresponding set-theoretic inductively defined set, which is the fixpoint of a monotone operator. Refer to \cite{aczel-induction} and \cite{moschovakis74,moschovakis} for further details about rule sets, monotone operators, and fixpoints.

Below, we describe the interpretations of inductive and recursive types with some examples. Given a well-defined (mutually) inductive type $\indic{n}$, where
\[
  \De_I = x_0:A_1,..., x_\ell:A_\ell \quad\text{and}\quad  A_i  =  \Pi\vec p : \vec P.\, \Pi\vec a_i : \vec B_i .\, s_i\,,
\]
let $rank(x_i) := rank(s_i)$. \bigskip

\noindent\textbf{Notations.}
\begin{enumerate}[(1)]
\item With each context $\Ga$, we associate a set $\lrbk{\Ga}$ of $\Ga$-{\it valuations} of the form $\cons{\al_1,\cdots, \al_n}$, where $n$ is the length of $\Ga$ and $\cons{,...,}$ denotes a sequence of  a finite length. Given a sequence $L=\cons{\al_1,\cdots, \al_n}$ and a natural number $i < n$, we set
$(L)_i = \al_{i+1}$. If $\al_{i+1}$ itself is a sequence of length $m$, then we write $(L)_{i,j}$ for $(\al_{i+1})_j$ if $j < m$, etc.

\item $\al, \be, \al_i, \be_i$ vary over single values while $\ga, \de, \ga_i, \de_i$ vary over valuations. $nil$ denotes the empty sequence. Given two valuations $\ga$ and  $\de$, the notation $\ga,\de$ denotes their concatenation. If $\de = \cons{\al}$, then we write $\ga, \al$ instead of $\ga, \cons{\al}$.

\item With each pair $(\Ga,t)$ formed by a context $\Ga$ and a term $t$, we associate a function $\lrbk{\Ga\vdash t}$  that is partially defined on $\Ga$-valuations: $\lrbk{\Ga\vdash t}_\ga$ denotes $\lrbk{\Ga\vdash t}(\ga)$ when $\ga \in \lrbk{\Ga}$.

\item In the following, we write $\lrbk{t}$ for $\lrbk{\Ga \vd t}_\ga$ if $\Ga$ and $\ga \in \lrbk{\Ga}$ are fixed in the context. Similarly, we use the notation $\vec u \in \lrbk{\vec A\,}$ for $u_1 \in \lrbk{\Ga \vd A_1}_\ga$, ..., $u_n \in \lrbk{\Ga, x_1:A_1, ... x_{n-1}: A_{n-1} \vd A_n)}_{\ga, u_1 ... u_{n-1}}$ for some context $\Ga$ and $\Ga$-valuation $\ga \in \lrbk{\Ga}$.
\end{enumerate}

\subsection{Interpretation of inductive types}\label{int-ind}

Here, we claim the existence of the interpretations of inductive types that satisfy the soundness of the rules $(ind\tminus wf), (ind\tminus type),$ and $(ind\tminus const)$ when the conditions in the typing rules are fullfilled. The formal definition is given in Appendix \ref{A-inductive}. Refer to \cite{dybjer}, whose idea is generalized in this paper.

\begin{lem}
  Suppose $\Ga \vd \cald$, where $\cald= \indic{n}$. Let $\ga \in \lrbk{\Ga}$ be given. As mentioned before, we suppress $\Ga$ and $\ga$ for better readability. Further suppose that 
\[
\begin{array}{c}
  \De_I := d_0:A_0,..., d_\ell:A_\ell\,,\,\,
  \De_C := c_1:T_1,..., c_m:T_m\,, \\[2ex]
  A_i  :=  \Pi\vec p : \vec P.\, \Pi\vec b_i : \vec B_i .\, s_i\,,\,\,
  T_k  := \Pi\vec p : \vec P.\, \Pi\vec z_k : \vec Z_k.\, d_{i_{k}}\, \vec p\, \vec t_k \,.
\end{array}
\]
Then, there is some rule set $\Phi$ such that the following interpretation of $\cald\cdot d_i$ and $\cald\cdot c_k$ satisfies the soundness of the rules $(ind\tminus type)$ and $(ind\tminus const)$:
\begin{iteMize}{$\bullet$}
\item $\lrbk{\cald\cdot d_i}:= \vec\lam(f_i)$
 where 
 $f_i (\vec p, \vec b_i) := \calif(\Phi)(i,\vec p, \vec b_i)$
for $\vec p,\vec b_i:\lrbk{\vec P, \vec B_i}$\,,
\item $\lrbk{\cald\cdot c_k}:= \vec\lam(g_k)$
 where
  $g_k(\vec p, \vec z_k) := \cons{k,\vec z_k}$
for $\vec p, \vec z_k : \lrbk{\vec P, \vec Z_k}$\,. 
\end{iteMize}
\end{lem}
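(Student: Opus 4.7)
The plan is to generalize Dybjer's rule-set construction to handle both the parameters $\vec p$ and the mutual character of the definition. Fix $\Ga \vd \cald$ and a valuation $\ga \in \lrbk{\Ga}$; all sets below are computed relative to this valuation. Take the base set $U$ to consist of tuples $\cons{i, \vec p, \vec b, u}$ with $0 \le i \le \ell$, $\vec p \in \lrbk{\vec P}$, $\vec b \in \lrbk{\vec B_i}$, and $u$ a formal constructor value of the shape $\cons{k, \vec z_k}$. Because each arity $A_i$ ends in a sort $s_i \ne \tprop$ and the parameters and indices live in some $\calv_{\ka_{rank(s_i)}}$, the whole base set can be placed inside a single $\calv_\ka$ with $\ka$ the maximum of the relevant inaccessibles; this guarantees that every manipulation below takes place inside ZF.

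For each constructor $c_k : \Pi \vec p : \vec P.\, \Pi \vec z_k : \vec Z_k.\, d_{i_k}\, \vec p\, \vec t_k$ I introduce a family of rules, one for every $\vec p \in \lrbk{\vec P}$ and every $\vec z_k$ satisfying the non-recursive constraints. The conclusion is $\cons{i_k, \vec p, \lrbk{\vec t_k}, \cons{k, \vec z_k}}$; the premises are read off each component $Z_{k,j}$. If $\dom(\De_I)$ does not occur in $Z_{k,j}$, then the requirement $z_j \in \lrbk{Z_{k,j}}$ is built into the parametrisation and contributes no rule-premise. If $Z_{k,j} = \Pi \vec y : \vec Y.\, d_{i'}\, \vec p\, \vec u$ in strictly positive form, each $\vec y \in \lrbk{\vec Y}$ contributes the premise $\cons{i', \vec p, \lrbk{\vec u}, \vec\app(z_j, \vec y)}$. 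Strict positivity guarantees that the resulting operator $\Ga_\Phi$ is monotone on $\calp(U)$, so we obtain a least fixpoint $\cali(\Phi)$, and set $\calif(\Phi)(i, \vec p, \vec b) := \menge{u}{\cons{i, \vec p, \vec b, u} \in \cali(\Phi)}$.

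To verify soundness of $(ind\tminus type)$ I would show that $\vec\lam(f_i)$ inhabits $\lrbk{A_i[.\bs \cald]}$; unfolding $A_i$ via the $\Pi$-interpretation, this reduces to $f_i(\vec p, \vec b_i) = \calif(\Phi)(i, \vec p, \vec b_i) \in \lrbk{s_i}$. The latter follows from the bound on the rank of $U$ together with the closure of $\calv_{\ka_{rank(s_i)}}$ under products and subsets; the side condition $s_i \ne \tprop$ is essential, since it prevents us from having to place a possibly non-trivial fixpoint into $\calv_{-1} = \single{0,1}$. For $(ind\tminus const)$, $\vec\lam(g_k) \in \lrbk{T_k[.\bs \cald]}$ unfolds to the claim that whenever $\vec p \in \lrbk{\vec P}$ and $\vec z_k \in \lrbk{\vec Z_k[.\bs \cald]}$, we have $\cons{k, \vec z_k} \in \calif(\Phi)(i_k, \vec p, \lrbk{\vec t_k})$; but this is precisely the closure property expressed by the rule associated with $c_k$ at parameters $\vec p$ and arguments $\vec z_k$, all of whose recursive premises are satisfied by assumption on $\vec z_k$.

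The step I expect to be the main obstacle is the translation of nested strictly positive occurrences: one must show, by induction on the structure of $Z_{k,j}$ using the formal definition of strict positivity, that every occurrence of some $d_{i'}$ can be faithfully replaced by a universally quantified rule-premise on the trace-encoded function $z_j$, while simultaneously preserving monotonicity of $\Ga_\Phi$ and the rank bound on the base set. Once this bookkeeping is in place, the remaining verifications are routine applications of the general rule-set theory and the cardinal-arithmetic facts about inaccessibles recalled earlier in the section.
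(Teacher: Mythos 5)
Your construction is essentially the one the paper gives in Appendix C: the same Dybjer-style rule set with conclusions $\cons{i_k,\vec p,\lrbk{\vec t_k},\cons{k,\vec z_k}}$, premises obtained by trace-applying the strictly positive recursive arguments to all $\vec y\in\lrbk{\vec H_{k,j}}$, and soundness of $(ind\tminus type)$/$(ind\tminus const)$ read off from the rank bound and the closure rule, and the "main obstacle" you flag is exactly what the paper handles by replacing recursive occurrences in $\vec Z_k$ with fresh variables valued at the appropriate $\calv_{\ka_{r(k,j)}}$ so that the rule set is a bona fide set. The only slight imprecision is attributing monotonicity of $\Ga_\Phi$ to strict positivity — every rule set yields a monotone operator; positivity is what lets the recursive occurrences be expressed as rule premises at all and keeps the base set inside one universe.
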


\begin{remm}\label{rem-inductive}
  The positivity condition is crucial for showing that the construction of the rule set $\Phi$ in Appendix \ref{A-inductive} is well defined.
\end{remm}

The elements of our rule sets $\Phi$ are of the form
$
\frac{u}{\cons{i, \vec p, \vec t, \cons{j, \vec v\,}}},
$
where $i$ denotes the $i$th inductive type $d_i$, 
$\vec p$ denote the parameters, 
$\vec t$ denote the non-parametric arguments of $d_i$, 
$j$ denotes the $j$th constructor of $d_i$, and 
$\vec v$ denote the non-parametric arguments of the $j$th constructor. Note that $\vec p, \vec t, \vec v$ could be empty.

\begin{bsp}[Natural numbers] Let $\cald_N$ be the inductive type for natural numbers, as in Remark \ref{rem-nat}. 
Then,
\begin{eqnarray*}
  \displaystyle \Phi_{\ttt{nat}}  & = & 
  \Bigg \{ \frac{\varnothing}{\cons{0, \cons{1} } }\Bigg \}\,\,\cup\,\, 
  \Bigg \{ \frac{\single{v } }{\cons{0, \cons{2,v} } }
  \,\, \Big\lvert\,\, 
  v\in \calv_{\ka_0} \Bigg \}\,,
\end{eqnarray*}
$\lrbk{\ttt{nat}} = \calif(\Phi_{\ttt{nat}})(0)$,
$\lrbk{\ttt{O}} = \cons{1}$, and
$\app(\lrbk{\ttt{S}}, n) = \cons{2,n}$ for any $n\in \lrbk{\ttt{nat}}$.
\end{bsp}

\begin{bsp}[Inductive families]\label{ind-fam} The following Coq-expression shows a typical use of inductive families.

\begin{verbatim}
  Inductive toto : Type -> Type :=
   | Y1 : forall x : Type, toto x
   | Y2 : forall x : Type, toto nat -> toto x -> toto x.
\end{verbatim}
  The inductive type $\ttt{toto}$ can be represented by $\cald_{toto}=\ind_0\single{\De_I:=\De_C}$, where
\begin{eqnarray*}
  \De_I & := & \ttt{toto}\,:\ttype_1\to \ttype_1\,,\\
  \De_C & := & \ttt{Y}_1: \Pi x:\ttype_1.\, \ttt{toto}\,\, x, \quad \ttt{Y}_2: \Pi x:\ttype_1.\, \ttt{toto}\,\, \tnat \to \ttt{toto}\,\, x\to \ttt{toto}\,\,\, x\,.
\end{eqnarray*}
Then,
\begin{equation*}
  \Phi_{\ttt{toto}} =
  \Bigg \{ 
  \frac{\varnothing}{\cons{0, x,\cons{1,x} } } 
  \,\,\Big\lvert \,\, x\in\calv_{\ka_1} \Bigg \}
  \,\, \cup \,\,
  \Bigg \{ \frac{\single{\cons{0, \lrbk{\tnat}, v_1}, \cons{0, x, v_2} } }
                       {\cons{0, x,\cons{2,x, v_1, v_2 } } }
    \,\, \Big\lvert\,\, 
  x,  v_1, v_2\in \calv_{\ka_1} \Bigg \}\,,
\end{equation*}
\vspace{.5ex}
$\app(\lrbk{\ttt{toto}}, x) = \calif(\Phi_{\ttt{toto}})(0,x)$, 
$\app(\lrbk{\ttt{Y}_1}, x)  =  \cons{1,x}$, and 
$\app(\lrbk{\ttt{Y}_2}, x, a, b)  =  \cons{2,x,a,b}$,
where $x\in \calv_{\ka_1}$, $a\in \app(\lrbk{\ttt{toto}}, \lrbk{\tnat})$, and $b\in \app(\lrbk{\ttt{toto}}, x)$.
\end{bsp}

\begin{bsp}[Inductive types with parameters]\label{ind-para}
  The following Coq-expression shows a typical use of parametric inductive types.

\begin{verbatim}
     Inductive titi (x : Type) : Type :=
      | Z1 : titi x
      | Z2 : titi nat -> titi x -> titi x.
\end{verbatim}
The inductive type $\ttt{titi}$ can be represented by $\cald_{titi}=\indic{1}$, where
\begin{eqnarray*}
  \De_I & := & \ttt{titi}\,:\ttype_1\to \ttype_0\,,\\
  \De_C & := & \ttt{Z}_1: \Pi x:\ttype_1.\, \ttt{titi}\,\, x, \quad  \ttt{Z}_2: \Pi x:\ttype_1.\, \ttt{titi}\,\, \tnat \to \ttt{titi}\,\, x\to \ttt{titi}\,\, x\,.
\end{eqnarray*}
Then, 
\begin{eqnarray*}
  \Phi_{\ttt{titi}} = 
  \Bigg \{ 
  \frac{\varnothing}{\cons{0, x,\cons{1}} } 
  \,\,\Big\lvert \,\, x\in\calv_{\ka_1} \Bigg \}\,\, \cup  \,\,\Bigg \{ \frac{\single{\cons{0, \lrbk{\tnat}, v_1}, \cons{0, x, v_2} } }
    {\cons{0, x,\cons{2, v_1, v_2 } } }
    \,\, \Big\lvert\,\, 
  x\in\calv_{\ka_1},\,\, v_1, v_2\in \calv_{\ka_0} \Bigg \}\,,
\end{eqnarray*}
$\app(\lrbk{\ttt{titi}}, x)  =  \calif(\Phi_{\ttt{titi}})(0,x)$,
$\app(\lrbk{\ttt{Z}_1},x)  :=  \cons{1}$, and
$\app(\lrbk{\ttt{Z}_2},x,a,b)  :=  \cons{2,a,b}$,
where $x\in \calv_{\ka_1}$, $a\in \app(\lrbk{\ttt{titi}}, \lrbk{\tnat})$, and $b\in \app(\lrbk{\ttt{titi}}, x)$.
\end{bsp}

\begin{remm}
  Note that in Coq, \ttt{toto} cannot have \ttt{Type -> Set} as its type, unlike \ttt{titi}. This difference is also reflected in their interpretations.
\end{remm}

\begin{bsp}[Mutually inductive types with parameters]
  Two inductive types $\ttt{tree}$ and $\ttt{forest}$ defined by $\cald_{TF}$ in Remark \ref{rem-treeforest} can be interpreted by means of
  the following rule set:
\begin{eqnarray*}
  \displaystyle \Phi & := & 
  \Bigg \{ \frac{\single{\cons{1, A, v_1} } }{\cons{0, A, \cons{1,a, v_1 } } }
  \,\, \Big\lvert\,\,  A, v_1\in \calv_{\ka_0}, a\in A \Bigg \}\\[1ex]
  & & \cup\,\, \Bigg \{\frac{\varnothing}{\cons{1, A, \cons{2} } } \,\, \Big\lvert\,\, A \in \calv_{\ka_0} \Bigg \}\,\,
  \cup\,\, \Bigg \{ \frac{\single{\cons{0, A, v_1},\cons{1, A, v_2} } }{\cons{1, A, \cons{3,v_1,v_2} } }
  \,\, \Big\lvert\,\, 
  A, v_1, v_2\in \calv_{\ka_0} \Bigg \}\,.
\end{eqnarray*}
\end{bsp}

\subsection{Interpretation of well-founded structured recursion}\label{int-rec}

A set defined by a (mutual) induction generates a canonical well-founded relation on the set, i.e., the relation defined according to the inductive construction of the elements, the so-called {\em structurally-smaller-than-relation}. This is the basis for the discipline of structural recursion, which stipulates that recursive calls consume structurally smaller data. 

Here, we claim the existence of the interpretations of recursive types that satisfy the soundness of the rules $(fix)$, $(fix\tminus eq)$, and $(\iota)$. A formal definition  is given in Appendix \ref{A-recursion}. Refer to \cite{dybjer}, whose study provides the basic idea.

\begin{lem}
Suppose $\Ga\edash \fix{f_\ell}{R} : A_j$, where
\[
\begin{array}{c}
R = \oto{f / k : A := t},\quad A_i\equiv \Pi \vec x_i : \vec B_i.\, A'_i, \quad \lh(\vec B_i)=k_i+1, \quad \ell\le n, \\[2ex]
(\Ga\edash A_i: s_i)_{\forall i\le n}, \quad (\Ga, \vec f : \vec A \edash t_i : A_i)_{\forall i\le n}, \quad \calf(\vec f, \vec A, \vec k, \vec t) \,.
\end{array}
\]
Let $\ga \in \lrbk{\Ga}$ be given. We suppress $\Ga$ and $\ga$ for better readability. Then, there is a rule set $\Psi$ such that the following interpretation of $\fix{f_\ell}{R}$ satisfies the soundness of the rules $(fix)$, $(fix\tminus eq)$, and $(\iota)$:
\begin{iteMize}{$\bullet$}
\item $\lrbk{\fix{f_\ell}{R}} = \vec \lam (h)$, where
$h(a_1,...,a_{k_\ell},\cons{k,\vec z_k}) = \calif (\Psi) (a_1,...,a_{k_\ell},\cons{k,\vec z_k})$ 
for $\vec a, \cons{k,\vec z_k} \in \lrbk{\vec B_\ell}$.
\end{iteMize}
\end{lem}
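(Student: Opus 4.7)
The plan is to mirror Dybjer's construction of interpretations for structural recursion, generalized to the present setting of mutually recursive functions over possibly mutually inductive types. First, I would construct the rule set $\Psi$ by structural analysis of the bodies $t_0,\dots,t_n$. Since each $t_i$ has the shape $\la \vec y : \vec Y.\,\la z:(\cald\cdot d)\vec u.\,t'_i$ and the constrained derivation guarantees that any occurrence of some $f_j$ inside $t'_i$ is applied to a structurally smaller argument, I can unfold $t'_i$ against each constructor pattern $\cons{k,\vec z_k}$ of $\cald\cdot d$ at the recursive position and read off a rule of the shape
\[
\frac{\{\cons{j, \vec r_j, \cons{k_j, \vec w_j}, v_j}\}_j}
     {\cons{i, \vec a, \cons{k, \vec z_k}, \beta}}
\]
whose premises list, for each recursive call $f_j\,\vec r_j\,(c_{k_j}\,\vec w_j)$ occurring in $t'_i$, the value $v_j$ supposedly returned on that smaller input, and whose conclusion asserts that on input $\cons{k,\vec z_k}$ function $i$ returns the value $\beta$ obtained by interpreting $t'_i$ under the environment in which each such recursive call is replaced by the corresponding $v_j$. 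The rule set $\Psi$ contains all such rules as the pattern $\cons{k,\vec z_k}$ and the auxiliary arguments range over the appropriate universe.

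Next, I would show that $\Psi$ is deterministic: the conclusion uniquely determines both the function index $i$ and the constructor index $k$, hence selects a unique clause in the underlying \textsf{case}, so at most one rule of $\Psi$ has any given conclusion. This lifts $\calif(\Psi)$ from a family of sets to a genuine (partial) function, justifying the definition $h(a_1,\dots,a_{k_\ell},\cons{k,\vec z_k}) := \calif(\Psi)(\ell,a_1,\dots,a_{k_\ell},\cons{k,\vec z_k})$ and thus the proposed interpretation $\lrbk{\fix{f_\ell}{R}} = \vec\lam(h)$. Totality of $h$ on $\lrbk{\vec B_\ell}$ follows by induction along the structurally-smaller-than-relation on $\lrbk{\cald\cdot d\,\vec u}$, whose well-foundedness is inherited from the inductive construction of $\calif(\Phi)$ provided by the previous lemma; the constrained-derivation hypothesis packaged in $\calf(\vec f,\vec A,\vec k,\vec t)$ is exactly what guarantees that the premises of each instantiated rule concern strictly smaller elements.

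For soundness, I would handle $(fix)$ by induction on the well-founded order above: each $h(\vec a,\cons{k,\vec z_k})$ is defined and lies in the appropriate slice of $\lrbk{A'_\ell}$ because $t'_i$ is well-typed in the telescope where each $f_j$ has been given its intended type. The rule $(\iota)$ is immediate from the construction of $\Psi$ combined with the equation $\app(\lam(f),x)=f(x)$ for $x\in\dom(f)$: the single rule of $\Psi$ whose conclusion matches $\cons{k,\vec z_k}$ computes exactly the denotation of $t'_\ell[\,f_i\bs \fix{f_i}{R}\,]$ applied to $\vec a$, because replacing the meta-level values $v_j$ furnished by $\calif(\Psi)$ with the syntactic recursive calls is valid by the induction hypothesis. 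Finally, $(fix\tminus eq)$ reduces to the observation that semantically equal $\vec A$ and $\vec t$ generate identical rule sets $\Psi$, hence identical least fixed points.

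The main obstacle I expect is the verification of the $(\iota)$ rule, because it requires a tight correspondence between the syntactic unfolding $t_\ell[f_i\bs \fix{f_i}{R}]\,\vec a$ and the one-step semantic unfolding provided by $\Gamma_\Psi$. Making this precise demands a substitution lemma tying $\lrbk{t'_\ell}$ in an environment that binds each $f_j$ to $\lrbk{\fix{f_j}{R}}$ to the denotation produced by the instantiated premises of $\Psi$; this is exactly the point where the structure imposed by the constrained derivation (ensuring every recursive call is on a subterm of the discriminee $z$) becomes indispensable, since otherwise the rule set would not be deterministic on a well-founded domain and $\calif(\Psi)$ would fail to agree with the intended recursive evaluation.
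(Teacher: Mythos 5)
Your proposal follows essentially the same route as the paper's construction in Appendix \ref{A-recursion}: the rule set $\Psi$ is obtained by $\beta$-, $\iota$-unfolding each body $t'_\ell$ against every constructor pattern of the recursive argument's inductive type, with premises recording the values of the (guaranteed structurally smaller) recursive calls and the conclusion recording the denotation of the body with those calls replaced by fresh variables bound to the premise values; determinism and well-foundedness then come from the constrained-derivation condition, exactly as you argue. The only cosmetic difference is that you tag conclusions with the function index $\ell$ explicitly, which is if anything a cleaner bookkeeping of the union $\bigcup_\ell \lrbk{\fix{f_\ell}{R}}$ than the paper's.
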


\begin{remm}\label{rem-recursion}
  The condition for constrained derivation is essential. Indeed, constrained derivation corresponds to guarded recursion  defined by \cite{gimenez}; hence, it guarantees that the construction of the rule set $\Psi$ in Appendix \ref{A-recursion} is well defined.
\end{remm}

The elements of our rule sets $\Psi$ are of the form
$
\frac{u}{\cons{\vec x, \cons{k,\vec y\,}, b}},
$
where $k$ denotes the $k$th constructor of the inductive type $d$ on which the recursion is performed, $\vec y$ denote the non-parametric arguments of $d$, $\vec x$ denotes the list of rest arguments of the constructor, and $b$ is the result of the function. Note that $\vec x, \vec y$ could be empty.




\begin{bsp}[Primitive recursion] The Coq-expression stated below is a general form of primitive recursion.
\begin{verbatim}
     Fixpoint PRec (A:Type)(g:A)(h:nat -> A -> A)(n:nat) {struct n} : A :=
       match n with
         | O => g
         | S p => h p (PRec A g h p)
       end.
\end{verbatim}
The corresponding term is $\ttt{PRec} := \fix{f_0}{f_0/3: B := t}$, where 
\begin{eqnarray*}
B &  = & \Pi (A:\ttype_i).\, \Pi (g:A).\, \Pi (h:\tnat\to A \to A).\, \Pi (n:\tnat).\,\tnat\,, \\
t & = & \la (A:\ttype_i).\, \la (g:A).\, \la (h:\tnat\to A \to A).\, \la (n:\tnat).\, \case{n,P,h_1, h_2}\,,
\end{eqnarray*}
$P = \la (\ell:\tnat).\, A$, $h_1 = g$, and $h_2 = \la (p:\tnat).\, h\, p\, (f_0\, A\, g\, h\, p)$. $\lrbk{\ttt{PRec}}$ is characterized by the following rule set $\Psi_{\ttt{PRec}} :=$
\begin{eqnarray*}
  & &\Bigg \{ \frac{\varnothing}{\cons{A, g, h,\cons{1}, g}}\,\,\Big\lvert\,\, A\in \calv_{\ka_i}, g \in A, h\in \lrbk{\tnat \to A \to A} \Bigg \}\\
& & \,\,\cup\,\, 
\Bigg \{ \frac{\single{\cons{A, g, h,p, v}}}{\cons{A, g, h, \cons{2,p}, \vec \app(h, p, v) }}\,\, \Big\lvert \,\, A\in \calv_{\ka_i}, g \in A, h\in \lrbk{\tnat \to A \to A}, p \in \lrbk{\tnat}, v\in A \Bigg \}\,.
\end{eqnarray*}

Given a type $A$, $\lrbk{\ttt{PRec}\, A}$ denotes the primitive recursor with values from $A$. For instance, 
\[
\vec \app(\lrbk{\ttt{plus}}, m, n )= \calif(\ttt{Prec})(\lrbk{\tnat}, m, \lrbk{h}, n)
\]
where
$h = \la (p:\tnat).\, \la (\ell:\tnat).\, \ttt{S}\, \ell $ and $m, n \in \lrbk{\ttt{nat}}$.

\end{bsp}

\begin{bsp}[Mutually recursive functions] The interpretations of $\ttt{Tsize}$ and $\ttt{Fsize}$ from Remark \ref{size} are characterized by the following rule set $\Psi_{\ttt{Size}} :=$
\begin{eqnarray*}
  & & \Bigg \{ \frac{\single{\cons{A, f, v'}} }
  {\cons{A, \cons{1, a, f}, \lrbk{\ttt{S}\, v'}} }\,\,
  \Big\lvert\,\, 
  A \in \calv_{\ka_0}, f\in \app(\lrbk{\ttt{forest}}, A), v' \in \lrbk{\tnat}, a\in A \Bigg \}\\
  & & \,\,\cup\,\, 
  \Bigg \{ \frac{\varnothing }
  {\cons{A, \cons{2}, \lrbk{\ttt{O}}} }\,\,
  \Big\lvert\,\, 
  A \in \calv_{\ka_0} \Bigg \}\\
  & & \,\,\cup\,\, 
  \Bigg \{ \frac{\single{\cons{A, t, v_1' }, \cons{A, f', v_2' } }}
  {\cons{A, \cons{3, t, f'}, \app(\lrbk{\ttt{plus}}, v_1', v_2')} }\,\,\Big\lvert \,\,  A\in \calv_{\ka_0}, t \in \app(\lrbk{\ttt{tree}}, A), \\
  & & \hspace{6.7cm} f' \in \app(\lrbk{\ttt{forest}}, A), v_1', v_2' \in \lrbk{\tnat}\Bigg \}\,.
\end{eqnarray*}

\end{bsp}

\section{Set-theoretic model and soundness}\label{soundness}
Since the denotations $\lrbk{\Ga}$ and $\lrbk{\Ga\vd t}$ will be defined by mutual induction on the size of their arguments, we need a size function $|\cdot|$ that guarantees the termination. In particular, the following properties should be satisfied:
\begin{iteMize}{$\bullet$}
\item $|\Ga| < |\Ga \vd A| < | \Ga, x:A |$,

\item $|\Ga \vd t |, | \Ga \vd A | < |\Ga, x:=t : A|$,


\item $|\De_I(d)|$, $|\De_C(c)| < |\indic{n}\cdot x|$ for all $x \in \dom(\De_I, \De_C)$, $d \in \dom(\De_I)$, and $c \in \dom(\De_C)$,

\item $|A_i|$, $|t_j| < |\oto{f/k:A:=t}|$ for all $A_i \in \vec A$ and $t_j \in \vec t$.
\end{iteMize}
An adequate size function can be defined by a simple extension of the one defined by \cite{miwe}:
 \begin{iteMize}{$\bullet$}
 \item The size $|t|$ of a term $t$ is (recursively) defined as the sum of the sizes of its immediate subterms plus $1$. 
 \begin{iteMize}{$-$}
\item The immediate subterms of $\indic{n}\cdot x$ are $\De_I(d)$ and $\De_C(c)$, where $d \in \dom(\De_I)$ and $c \in \dom(\De_C)$.

\item The immediate subterms of $\fix{f_j}{\oto{f/k:A:=t}}$ are $\vec A, \vec t$.\smallskip
 \end{iteMize}

 \item The size of a context $\Ga$ is defined as follows:
   \begin{iteMize}{$-$}
   \item $|[\,]| = 1/2$,
   \item $|\Ga, (x:t)| = |\Ga| + |t|$,
   \item $|\Ga, (x:=t:A)| = |\Ga| + |t| + |A|$,
   \item $|\Ga, \indic{n}| = |\Ga| +1$.\smallskip
   \end{iteMize}
   
   \item $|\Ga \vd t| = |\Ga| + |t| - \frac 1 2$.
 \end{iteMize}
 
 \begin{remm}
 As mentioned in Remark \ref{rem-inductive} and Remark \ref{rem-recursion}, the positivity condition and the condition for constrained derivation play a crucial role for establishing the soundness proof of Theorem \ref{thm-soundness}.
 \end{remm}


\begin{defi}\label{def-interpretation}
 The set-theoretic interpretations of $\lrbk{\Ga}$ and $\lrbk{\ga\vd t}$ are defined by a mutual induction on the size of their arguments.
  \begin{enumerate}[(1)]
  \item For each context $\Ga$, the set $\lrbk{\Ga}$ is defined as follows:
\begin{align*}
  \lrbk{[\,]} & :=  \{\texttt{nil}\}, \\[.3ex]
  \lrbk{\Ga, x:A} & :=  \menge{\ga,\al}{\ga\in\lrbk{\Ga}, \lrbk{\Ga\vd A}_\ga\downarrow \text{ and } \al\in\lrbk{\Ga\vd A}_{\ga}}, \\[.3ex]
  \lrbk{\Ga, x:=t:A} & :=  \menge{\ga,\al}{\ga\in\lrbk{\Ga},\, \lrbk{\Ga\vd A}_\ga\downarrow,\, \lrbk{\Ga\vd t}_\ga\downarrow\\
    & \hspace*{1cm}\hfill \text{ and } \al=\lrbk{\Ga\vd t}_{\ga}  \in \lrbk{\Ga\vd A}_\ga }, \\[.3ex]
  \lrbk{\Ga,\indic{n}} & :=  \{\ga \,\lvert\, \ga\in\lrbk{\Ga}\}.
\end{align*}


\item The interpretation $\lrbk{\Ga\vd t}$ of a term $t$ in a context $\Ga$ is a partial function defined on $\lrbk{\Ga}$: Given $\ga \in \lrbk{\Ga}$,
\begin{align*}
  \lrbk{\Ga\vd \tprop}_\ga & :=  \{0,1 \}, \\  
  \lrbk{\Ga\vd \ttype_i}_\ga & :=  \calv_{\kappa_i}, \\
  \lrbk{\Ga\vd x}_{(\al_1,\dots,\al_n)} & :=  \al_i \quad\text{if $x$ is the $i$th declared variable in $\Ga$,}\tag{$\ast$}\\
  \lrbk{\Ga\vd \Pi x:A.B}_\ga & :=  \{\lam (f) : f \in \Pi_{\al \in\lrbk{\Ga\vd A}_\ga} \lrbk{\Ga, x:A\vd B}_{(\ga,\al)} \}, \\
  \lrbk{\Ga\vd \la x:A.t}_\ga & :=  \lam(\al\in\lrbk{\Ga\vd A}_\ga \mapsto \lrbk{\Ga,x:A\vd t}_{(\ga,\al)}), \\
  \lrbk{\Ga\vd t\,u}_\ga & :=  \app(\lrbk{\Ga\vd t}_\ga, \lrbk{\Ga\vd u}_\ga), \\
  \lrbk{\Ga\vd \lett{x:=t}{u}}_\ga & :=  \lrbk{\Ga, (x:=t:A)\vd u}_{\ga,\lrbk{\Ga\vd t}_\ga},\\
  & \text{where $A$ is such that $\lrbk{\Ga\vd t}\in\lrbk{\Ga\vd A}$,}
  \tag{$\dag$} \\
\lrbk{\Ga\vd \indic{n}\cdot z}_\ga & :=   \text{as explained above if defined,}\\
\lrbk{\Ga\edash\case{e,P,M_1,\dots,M_\ell}}_\ga & :=  \vec \app(\lrbk{\Ga\vd M_j},(\lrbk{\Ga\vd e})_{1},...,(\lrbk{\Ga\vd e})_{q} )\\
  & \text{if $(\lrbk{\Ga\vd e})_{0}=j$ where $\lh(e) = q+1$}, \\
\lrbk{\Ga\edash (\fix{f_j}{\oto{f/k:A:=t}} )}_\ga & :=  \text{as explained above if defined.}
\end{align*}

\noindent $(\ast)$ If $x \in \dom(\Ga)$, then the occurrence should be unique.

\noindent $(\dag)$ $A$ could be any term with the given property since the interpretation, when defined, is independent of it. 
  \end{enumerate}

\end{defi}

The following lemma is crucial for the soundness proof.

\begin{lem}[Substitutivity]\label{substitution}
Let $\Ga$ be a context and let $u, A$ be terms such that $\lrbk{\Ga\vd u}_\ga \in \lrbk{\Ga\vd A}_\ga$ for some $\ga\in \lrbk{\Ga}$ (assuming that both of them are defined), and write $\al=\lrbk{\Ga\vd u}_\ga$\,.
\begin{enumerate}[\em(1)]
\item Suppose $(\ga,\al),\de\in \lrbk{\Ga,x:A,\De}$. Then, $\ga,\de\in \lrbk{\Ga,\De[x\bs u]}$.
\item Suppose $(\ga,\al),\de\in \lrbk{\Ga,x:A,\De}$ and $\lrbk{\Ga,x:A,\De\vd t}_{(\ga,\al),\de} \downarrow$. Then,
  \begin{iteMize}{$\bullet$}
  \item $\lrbk{\Ga,\De[x\bs u]\vd t[x\bs u]}_{\ga,\de} \downarrow$.
  \item $\lrbk{\Ga,\De[x\bs u]\vd t[x\bs u]}_{\ga,\de}\, =\, \lrbk{\Ga,x:A,\De\vd t}_{(\ga,\al),\de}\, =\, \lrbk{\Ga,x:=u:A,\De\vd t}_{(\ga,\al),\de}$\,.
  \end{iteMize}
\end{enumerate}
\end{lem}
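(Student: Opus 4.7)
The plan is to prove parts (1) and (2) simultaneously by well-founded induction on the lexicographic pair $(|\Ga, x:A, \De|, |t|)$, using the size function $|\cdot|$ defined just before the lemma. The simultaneous induction is needed because the definition of $\lrbk{\Ga, x:A, \De}$ invokes $\lrbk{\Ga, x:A, \De'\vd B}$ for each type declared in $\De$, while the interpretation of a compound term at $\Ga$ pushes into extended contexts. The size function is set up precisely so that in both directions the arguments to the induction hypothesis are strictly smaller.

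For part (1), I would proceed by cases on the last entry of $\De$. If $\De = [\,]$ the claim is immediate from $\ga \in \lrbk{\Ga}$. For $\De = \De', (y:B)$ I would apply the induction hypothesis to $\De'$, which gives $\ga,\de' \in \lrbk{\Ga, \De'[x\bs u]}$, and then apply part (2) to $B$ (valid since $|B| < |\Ga, x:A, \De|$) to transport the membership $\beta \in \lrbk{\Ga, x:A, \De' \vd B}_{(\ga,\al),\de'}$ into $\beta \in \lrbk{\Ga, \De'[x\bs u] \vd B[x\bs u]}_{\ga,\de'}$. The definitional case $\De = \De',(y:=s:B)$ requires an additional application of (2) to $s$, and the inductive-type case $\De = \De', \indic{n}$ is an immediate use of the induction hypothesis since $\lrbk{\Ga, \De, \indic{n}} = \lrbk{\Ga, \De}$.

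For part (2), I would perform a case analysis on $t$. Sorts are trivial. The critical case is $t = y$. If $y = x$, then $t[x\bs u] = u$; the result reduces to $\lrbk{\Ga \vd u}_\ga = \al$, together with a trivial weakening observation that the interpretation of a term depends only on the prefix of the valuation indexing its declared variables. If $y \neq x$, the same projection argument does the job, noting that variables declared in $\De$ are reindexed appropriately but their values are untouched. For the compound constructions $\Pi y:B.C$, $\la y:B.M$, $M\,N$, and $\lett{y:=s}{r}$, the induction hypothesis applied to each immediate subterm combines with the syntactic identities $(\Pi y:B.C)[x\bs u] = \Pi y:B[x\bs u].C[x\bs u]$, etc., to yield the claim. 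The second equality, comparing the $\Ga, x:A, \De$ and $\Ga, x:=u:A, \De$ interpretations, follows from the same structural recursion since no clause of Definition \ref{def-interpretation} ever distinguishes between the two context entries, only consulting the valuation at the slot for $x$, which is $\al$ in both cases.

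The main obstacle will be the cases $t = \indic{n}\cdot z$, $t = \case{e,Q,\vec h}$, and $t = \fix{f_j}{R}$, because their interpretations are constructed via the rule sets $\Phi$ and $\Psi$ of Sections \ref{int-ind} and \ref{int-rec}, whose definitions involve interpreting the subterms $\vec P, \vec B_i, \vec Z_k, \vec t_k$ (resp.\ $\vec A, \vec t$) under the ambient context and valuation. I would show that substitution and rule-set construction commute: the induction hypothesis implies that each ingredient interpreted under $(\Ga, x:A, \De)$ at $(\ga,\al,\de)$ agrees with its substituted version interpreted under $(\Ga, \De[x\bs u])$ at $(\ga,\de)$, so $\Phi$ and $\Psi$ are literally equal rule sets in both readings; their least fixed points therefore coincide, and so do the applications $\vec\lam$ and $\vec\app$ that wrap them. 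The constructed derivation condition $\calf$ and the positivity condition, crucial as noted in Remarks \ref{rem-inductive} and \ref{rem-recursion} for the rule sets to be well defined, are purely syntactic and are preserved under substitution, so no new obligations arise.
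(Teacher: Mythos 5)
Your overall strategy coincides with the paper's: a mutual induction on the sizes of the arguments in which, for a given $\De$, part (1) is established before part (2); the variable case handled by a weakening observation (the interpretation of $u$ does not depend on $\dom(\De)$); the compound cases by the induction hypotheses together with the syntactic substitution identities; and the second equality by the fact that Definition \ref{def-interpretation} never distinguishes a declaration $(x:A)$ from $(x:=u:A)$ beyond the value stored in the valuation slot, which is $\al$ in both readings. Your explicit treatment of the $\indic{n}\cdot z$, \textsf{case} and \textsf{fix} cases, via commutation of substitution with the rule-set constructions of Appendices \ref{A-inductive} and \ref{A-recursion}, is considerably more detailed than the paper's one-line dismissal of the remaining cases, but it is exactly what that dismissal intends.

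The one step that would fail as literally written is the induction measure. The lexicographic pair $(|\Ga,x:A,\De|,\,|t|)$ does not decrease in the cases where the interpretation recurses into an extended context: for $t=\Pi y:B.C$ (and likewise for $\la$, \textsf{let}, and the branches of \textsf{case}) you must invoke the hypothesis for $C$ in the context $\Ga,x:A,\De,y:B$, whose first component $|\Ga,x:A,\De|+|B|$ is strictly larger than $|\Ga,x:A,\De|$. The repair is immediate and is the measure the paper in fact uses: assign to an instance of part (2) the combined size $|\Ga,x:A,\De\vd t|=|\Ga,x:A,\De|+|t|-\tfrac12$ and to an instance of part (1) the size $|\Ga,x:A,\De|$; the inequalities $|\Ga|<|\Ga\vd A|<|\Ga,x:A|$ listed before the lemma are exactly what makes this single well-founded measure interleave the two parts and shrink under recursion into extended contexts. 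With that change of measure, your argument is the paper's proof.
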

\begin{proof}
The assertions are proved for each $\De$ and $t$ by a mutual induction on the size of their arguments. In particular, given $\De$, the first assertion is proved before the second one for all $t$. In the case of $\De = [\,]$, the claims are obvious. Assume that $\De = \De_0, y:B$ and $\de = \de_0,\be$. The other cases can be considered similarly.
\begin{enumerate}[(1)]
\item $(\ga,\al),\de_0, \be \in \lrbk{\Ga,x:A,\De_0, y:B}$. Then, using the I.H. of the second claim, we have
  \[
    \be  \in  \lrbk{\Ga,x:A,\De_0\vd B}_{\ga,\al,\de_0}
     =    \lrbk{\Ga,\De_0[x\bs u]\vd B[x\bs u]}_{\ga,\de_0}\,.
  \]
That is, $\ga,\de_0,\be\in \lrbk{\Ga,\De_0[x\bs u],y:B[x\bs u]}$. 

\item We proceed by induction on $t$. If $t=x$, the claim follows because $\lrbk{\Ga\vd u}_\ga\downarrow$ implies that $\lrbk{\Ga,\De[x\bs u]\vd u}_{\ga,\de}\downarrow$ and $\lrbk{\Ga,\De[x\bs u]\vd u}_{\ga,\de} = \lrbk{\Ga\vd u}_\ga$. This is because the interpretation of $u$ does not depend on $\dom(\De)$. Other cases can be easily shown by using induction hypotheses.
\end{enumerate}
\end{proof}

\begin{thm}[Soundness]\label{thm-soundness} Our type system is sound with respect to the set-theoretic interpretation defined in Definition \ref{def-interpretation} in the following sense:
\begin{enumerate}[\em(1)]
\item If $\wf(\Ga)$, then $\lrbk{\Ga}$ is defined.
\item If $\Ga\ed M:A$, then $\lrbk{\Ga}$ is defined, and for any $\ga\in \lrbk{\Ga}$, it holds that $\lrbk{\Ga\vd M}_\ga$ and $\lrbk{\Ga\vd A}_\ga$ are defined, and that
\[
\lrbk{\Ga\vd M}_\ga \in \lrbk{\Ga\vd A}_\ga\,.
\]
\item If $\Ga\ed M=N:A$, then $\lrbk{\Ga}$ is defined, and for any $\ga\in \lrbk{\Ga}$, it holds that $\lrbk{\Ga\vd M}_\ga$, $\lrbk{\Ga\vd N}_\ga$, and $\lrbk{\Ga\vd A}_\ga$ are defined, and that
\[
\lrbk{\Ga\vd M}_\ga = \lrbk{\Ga\vd N}_\ga \in \lrbk{\Ga\vd A}_\ga\,.
\]
\item If $\Ga\edash M\prec N$, then $\lrbk{\Ga}$ is defined, and for any $\ga\in \lrbk{\Ga}$, it holds that $\lrbk{\Ga\vd M}_\ga$ and $\lrbk{\Ga\vd N}_\ga$ are defined, and that
\[
\lrbk{\Ga\vd M}_\ga \tm \lrbk{\Ga\vd N}_\ga \,.
\]
\end{enumerate}
\end{thm}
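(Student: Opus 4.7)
The proof I envisage proceeds by a single simultaneous induction on the size of the derivations for all four judgments $\wf(\Ga)$, $\Ga \vd M : A$, $\Ga \vd M = N : A$, and $\Ga \vd M \prec N$, structured as a case analysis on the last rule used. The choice of the size measure defined just before Definition \ref{def-interpretation} together with Lemma \ref{substitution} will be the two workhorses of the induction: the former guarantees that the recursively defined interpretation $\lrbk{\cdot}$ terminates whenever the derivation does, and the latter disposes of every reduction-style rule in one uniform blow.

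The bookkeeping cases are straightforward. For $(wf)$, $(ax)$, $(var)$, $(let)$, and the congruence rules $(ref)$, $(sym)$, $(trans)$ I unfold Definition \ref{def-interpretation} and invoke the induction hypothesis. For $(\Pi)$, I split according to the predicate $\calp(s_1, s_2, s_3)$: the impredicative clause $s_2 = s_3 = \tprop$ is where Lemma \ref{A-poly} is needed, since it guarantees that $\menge{\lam(f)}{f \in \prod_{\al} \lrbk{B}_{\ga,\al}} \subseteq 1 = \lrbk{\tprop}$; the predicative clause $s_1 \in \ttype_i$, $s_2 = \ttype_j$, $s_3 = \ttype_k$ with $k \geq \max(i,j)$ is handled by the closure of $\calv_{\ka_k}$ under $\prod$, which is exactly the feature of inaccessible cardinals advertised in Section \ref{interpretation}. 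For $(\la)$, $(app)$, I appeal directly to the trace-encoding identity $\app(\lam(f), x) = f(x)$ for $x \in \dom(f)$. For the reduction rules $(\beta)$, $(\delta)$, $(\zeta)$, the left-hand and right-hand sides are shown equal in the model by combining $\app(\lam(f), x) = f(x)$ (or the analogous equation for $\lett{}{}$) with the substitutivity identity $\lrbk{\Ga, \De[x \bs u] \vd t[x \bs u]}_{\ga, \de} = \lrbk{\Ga, x{:}A, \De \vd t}_{(\ga,\al),\de}$ from Lemma \ref{substitution}. The conversion and cumulativity rules $(conv)$, $(conv\tminus eq)$, $(cum)$, $(cum\tminus eq)$ are then immediate from the corresponding induction hypothesis on the equality or subsumption premise.

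The two subtyping axioms $(inc)$ are checked directly: $\lrbk{\tprop} = \single{0,1} \subseteq \calv_{\ka_0} = \lrbk{\ttype_0}$ because $0, 1 \in \calv_\omega$, and $\calv_{\ka_j} \subseteq \calv_{\ka_{j+1}}$ by construction. The propagation rules $(trans\tminus inc)$, $(weak\tminus inc)$, $(\Pi\tminus inc)$, $(eq\tminus inc)$ are routine; in the $\Pi$ case, if $f \in \prod_\al \lrbk{A_2}_{\ga,\al}$ and $\lrbk{A_2}_{\ga,\al} \subseteq \lrbk{B_2}_{\ga,\al}$ by IH, then $f \in \prod_\al \lrbk{B_2}_{\ga,\al}$, so $\lam(f)$ lies in the target $\Pi$-set. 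For $(ind\tminus wf)$, $(ind\tminus type)$, $(ind\tminus const)$, $(case)$, $(case\tminus eq)$, $(fix)$, $(fix\tminus eq)$, and the two $(\iota)$ rules, I will simply invoke the existence statements of Section \ref{int-ind} and Section \ref{int-rec} together with the observation that, by construction, $\lrbk{\cald \cdot c_k}$ applied to $\vec p, \vec z_k$ is the tagged tuple $\cons{k, \vec z_k}$, so the match in the definition of $\lrbk{\case{\cdot}}$ triggers the intended branch.

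The main obstacle, as the paper highlights in Remark \ref{counter-subtyping}, is the joint treatment of the impredicative clause of $(\Pi)$ with the subtyping $\tprop \prec \ttype_0$: the interpretation of $\Pi x{:}A.B$ must land in $\single{0,1}$ when $B : \tprop$ regardless of whether $A$ lives in $\tprop$ or in some $\ttype_i$. This is exactly what Lemma \ref{A-poly} buys us, provided that when I compute $\lrbk{\Pi x{:}A.B}$ I always use the \emph{domain} $\lrbk{A}_\ga$ rather than any value extracted from $\lrbk{B}$. The secondary difficulty is the $(fix)$/$(\iota)$ pair, where I must argue that the structurally-smaller constrained derivation condition ensures that the recursive call in $\lrbk{\fix{f_\ell}{R}}$ is on an element of smaller rank in $\cali(\Phi)$ for the inductive type of the recursion argument, so that the rule set $\Psi$ of Section \ref{int-rec} is deterministic and well-founded; as noted in Remark \ref{rem-recursion}, this is precisely the role of the constrained derivation predicate $\calf$.
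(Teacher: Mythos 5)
Your proposal is correct and follows essentially the same route as the paper: a simultaneous induction over the typing derivations, with Lemma \ref{A-poly} handling the impredicative clause of $(\Pi)$, closure of $\calv_{\ka_k}$ under the power-set operation handling the predicative clause, Lemma \ref{substitution} discharging $(\be)$ and $(\zeta)$, and the rule-set constructions of Sections \ref{int-ind} and \ref{int-rec} (relying on positivity and constrained derivation) absorbing the inductive, $\textup{\textsf{case}}$, $\textup{\textsf{fix}}$, and $\iota$ cases. Your spelled-out verification of the cumulativity rules and of $(\Pi\tminus inc)$ only makes explicit what the paper dismisses as obvious.
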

\begin{proof}
  We proceed by a simultaneous induction over the typing derivation. The cases $(wf), (ax), (var), (weak)$, and $(weak\tminus eq)$ are obvious.\medskip

\noindent $(\Pi)$ Suppose
\[
\infer{\Ga\edash \Pi x:A.B:s_3}{\Ga\edash A:s_1 & \Ga,x:A\edash B:s_2}\,.
\]
By I.H., it holds that $\lrbk{\Ga\vd A}_\ga\in \lrbk{\Ga\vd s_1}_\ga$ and $\lrbk{\Ga,x:A\vd B}_{\ga,\al}\in \lrbk{\Ga,x:A\vd s_2}_{\ga,\al}$ for all $\al\in \lrbk{\Ga\vd A}_\ga$. Now, we need to show that 
\[
\lrbk{\Ga\vd \Pi x:A.B}_\ga \in \lrbk{\Ga\vd s_3}_\ga\,.
\]
If $s_2 = s_3 = \tprop$, then Lemma \ref{A-poly} implies the claim. Assume $s_1 = \ttype_i, s_2 = \ttype_j, s_3 = \ttype_k$, and $i, j \le k$. Then, $\lrbk{\Ga \vd s_3} = \calv_{\ka_k}$, where $\ka_k$ is the $k$th inaccessible cardinal; hence, $\calv_{\ka_k}$ is closed under the power set operation.\medskip

\noindent The cases $(\Pi\tminus eq)$, $(\la)$, and $(\la\tminus eq)$ are obvious.\medskip
%

\noindent $(app)$ Suppose 
\[
\infer{\Ga\edash MN: B[x\bs N]}{\Ga\edash M:\Pi x:A.B & \Ga\edash N:A}\,.
\]
By induction hypothesis, it holds that $\lrbk{\Ga\vd N}_\ga \in \lrbk{\Ga\vd A}_\ga$ and $\lrbk{\Ga\vd M}_\ga=\lam(f)$ for some function $f$ with $\dom(f)=\lrbk{\Ga\vd A}$ and $f(\al)\in \lrbk{\Ga,x:A\vd B}_{\ga,\al}$ for any $\al\in \lrbk{\Ga\vd A}_\ga$. Thus, we have
\begin{align*}
  \lrbk{\Ga\vd MN}_\ga & = \app(\lrbk{\Ga\vd M}_\ga,\lrbk{\Ga\vd N}_\ga) 
 =  f(\lrbk{\Ga\vd N}_\ga)\\
 & \in \lrbk{\Ga,x:A\vd B}_{\ga,\lrbk{\Ga\vd N}_\ga}
 = \lrbk{\Ga\vd B[x\bs N]}_\ga\,.
\end{align*}
The cases $(app\tminus eq)$, $(let)$, and $(let\tminus eq)$ are similar.\medskip

The soundness of $(ind\tminus wf)$, $(ind\tminus type)$, and $(ind\tminus cons)$ are obvious from the interpretation constructions. The interpretations of inductive types and constructors are possible because of the induction hypotheses. This is the same for $(fix)$, $(fix\tminus eq)$, $(case)$, and $(case\tminus eq)$.\medskip

\noindent The cases $(ref), (sym)$, $(trans)$, $(conv)$, and $(conv\tminus eq)$ are obvious.\medskip

\noindent $(\be)$ Suppose
\[
\infer{\Ga\edash (\la x:A.M)N= M[x\bs N]: B[x\bs N]}{\Ga,x:A\edash M:B & \Ga\edash A:s_1 &  \Ga,x:A\edash B:s_2 & \Ga\edash N:A}\,.
\]
It remains to show that $\lrbk{\Ga\vd (\la x:A:M)N}_\ga = \lrbk{M[x\bs N]}_\ga$:
\begin{align*}
  \lrbk{\Ga\vd (\la x:A:M)N}_\ga & = \app(\lrbk{\Ga\vd \la x:A.M}_\ga, \lrbk{\Ga\vd N}_\ga)\\
 & = \app(\lam(\al\in \lrbk{\Ga\vd A}_\ga \mapsto \lrbk{\Ga,x:A\vd M}_{\ga,\al}),\lrbk{\Ga\vd N}_\ga)\\
 & = \lrbk{\Ga,x:A\vd M}_{\ga,\lrbk{\Ga\vd N}_\ga}\\
 & = \lrbk{M[x\bs N]}_\ga
\end{align*}
by Lemma \ref{substitution} because we know that $\lrbk{\Ga\vd N}_\ga \in \lrbk{\Ga\vd A}_\ga$ by induction hypothesis. The judgmental equality plays a crucial role in this case.\medskip

The case $(\delta)$ is obvious, and the case $(\zeta)$ follows from Lemma \ref{substitution} and induction hypothesis. The case $(\iota)$ is obvious by definition. The cumulativity rules are obviously sound. Finally, the soundness of the the constrained typing rules in Figure \ref{fig:constraint} follows directly from the arguments stated above. 
\end{proof}

\begin{thm}[Consistency]
  There is no term $t$ such that $\ed t:\Pi x:*.x$\,.
\end{thm}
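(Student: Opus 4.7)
The plan is to derive a contradiction from the soundness theorem (Theorem \ref{thm-soundness}). Assume, for the sake of contradiction, that some closed term $t$ satisfies $\vd t : \Pi x:\tprop.x$. Since $\wf([\,])$ holds, $\lrbk{[\,]} = \{\ttt{nil}\}$, and soundness part (2) applied to $[\,] \vd t : \Pi x:\tprop.x$ yields that both $\lrbk{[\,]\vd t}_{\ttt{nil}}$ and $\lrbk{[\,]\vd \Pi x:\tprop.x}_{\ttt{nil}}$ are defined with
\[
\lrbk{[\,]\vd t}_{\ttt{nil}} \in \lrbk{[\,]\vd \Pi x:\tprop.x}_{\ttt{nil}}.
\]
It therefore suffices to show that the right-hand set is empty.

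Unfolding Definition \ref{def-interpretation},
\[
\lrbk{[\,]\vd \Pi x:\tprop.x}_{\ttt{nil}} = \Bigl\{\lam(f) \,\Bigm|\, f \in \prod_{\al \in \lrbk{[\,]\vd \tprop}_{\ttt{nil}}} \lrbk{x:\tprop \vd x}_{\ttt{nil},\al}\Bigr\}.
\]
By definition $\lrbk{[\,]\vd \tprop}_{\ttt{nil}} = \{0,1\} = \{\varnothing, \{\varnothing\}\}$, and by the variable clause $\lrbk{x:\tprop \vd x}_{\ttt{nil},\al} = \al$. Hence the product would require a function $f$ with $\dom(f) = \{0,1\}$ satisfying $f(0) \in 0 = \varnothing$, which is impossible. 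So the dependent product, and therefore the set of its $\lam$-encodings, is empty, contradicting the membership displayed above.

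There is essentially no obstacle here, because the real work has already been done in the soundness theorem and in setting up the interpretation so that $\tprop$ is modelled by the two-element set $\{0,1\}$ containing the empty set as a distinguished \emph{false} proposition. The only point worth double-checking during the write-up is that the variable rule indeed gives $\lrbk{x:\tprop \vd x}_{\ttt{nil},\al} = \al$ (which follows immediately from clause $(\ast)$), and that no silent assumption about $\lrbk{\vd t}$ being of the form $\lam(f)$ is used — the contradiction is obtained purely from the emptiness of the ambient set.
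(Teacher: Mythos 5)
Your proof is correct and follows the same route as the paper, which simply observes that $\lrbk{\vd \Pi x:\tprop.x}_{\texttt{nil}} = \varnothing$ and appeals to soundness. You have merely spelled out the details of why that set is empty (the product clause forces $f(0)\in 0=\varnothing$), which is exactly the intended argument.
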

\begin{proof}
 Note that  $\lrbk{\vd \Pi x:*.x}_{\tt nil} = \varnothing$\,.
\end{proof}

\section{Conclusion}\label{conclusion}
We identified some critical issues in constructing a set-theoretic, proof-irrelevant model of CC with cumulative type universes. Our construction reconfirmed that proof-irrelevance is a subtle and difficult subject to tackle when it is combined with the subtyping of the universes, in particular, $\tprop \prec \ttype$. We showed that the set-theoretic interpretation can be relatively easy when we work with judgmental equality. We believe that our study provides a (relatively) easy way for justifying the correctness of type theory in Martin-L\"of-styled, i.e., with simple model and, in particular, no proof of the strong normalization, which is usually very difficult to establish.

Besides the historical importance of Martin-L\"of-style type theory and the technical difficulties with external $\be$-reduction, there is another theoretical and practical reason for studying type systems with judgmental equality. In general, the equivalence of two systems with or without judgmental equality remains an open problem. Proving the equivalence of two systems with or without judgmental equality is not a simple task, even though some positive results have been achieved by \cite{coquand-algo}, \cite{goguen-phd, goguen-sound}, \cite{adams-eq}, and \cite{siles}. However, they are not sufficiently general to cover the case with cumulative type universes. Although \cite{adams-eq} mentioned that it might be possible to extend his proof to more general systems with unique principal types instead of type uniqueness as in the case of Luo's ECC \citep{luo-ecc, luo-utt} and Coquand's CIC, it still remains an open question.

A positive consequence of the work of \cite{adams-eq} and \cite{siles} is that the failed attempt of \cite{miwe}, i.e., without using sorted variables, would work if one first considers the system CC with judgmental equality and uses its equivalence to the usual CC. This is indeed the case for the model construction described in this paper, where we restrict the model construction to CC.

\section*{Acknowledgment}
We would like to express our sincere gratitude to Hugo Herbelin and Bruno Barras for their insightful discussions and advice. We would also like to thank the anonymous referees whose comments enabled us to improve this paper significantly.



\bibliographystyle{plain}
 



\appendix
\section{Definition of free variables and substitution}\label{A-defi}

The definitions of the sets of free variables in a context or term are standard. 
  \begin{align*}
    y[x\bs u] & := \begin{cases}
      u  & \text{if $x = y$},\\	
      y & \text{otherwise,}
    \end{cases}\\
    s[x\bs u] & := s,\\
    (\Pi y:A.B)[x\bs u] & := \begin{cases}
      \Pi y:A[x\bs u].B  & \text{if $x = y$},\\	
      \Pi y:A[x\bs u].(B[x\bs u]) & \text{otherwise,}\tag{$\ast$}
    \end{cases}\\ 
    (\la y:A.B)[x\bs u] & := \begin{cases}
      \la y:A[x\bs u].B  & \text{if $x = y$},\\	
      \la y:A[x\bs u].(B[x\bs u]) & \text{otherwise,}\tag{$\ast$}
    \end{cases}\\ 
    (\lett{y:=t_1}{t_2})[x\bs u] & :=
    \begin{cases}
      (\lett{y:=t_1[x\bs u]}{t_2}) & \text{if $x = y$},\\	
      \lett{y:=(t_1[x\bs u])}{t_2[x\bs u]} & \text{otherwise,}\tag{$\ast$}
    \end{cases}\\ 
    (t_1\, t_2)[x\bs u] & := (t_1[x\bs u]) (t_2[x\bs u]),\\
    \case{e,P,\vec f\, }[x\bs u] & := \case{e[x\bs u],P[x\bs u],\vec f[x\bs u]},\\
    (\indic{n}\cdot y)[x\bs u] & := \left
      \{\begin{array}{l}
        (\indic{n}\cdot y)\\
        \text{if $x \in \dom(\De_I,\De_C)$ or $FV(u)\cap\dom(\De_I,\De_C)\neq \varnothing$,}\tag{$\dag$}\\
        (\ind_n\!\single{\De_I[x\bs u]:=\De_C[x\bs u]}\cdot y) \text{ otherwise,}
      \end{array} \right .\\
    (\fix{y_i}{\oto{y / k: A:=t}})[x\bs u] & := \left \{
      \begin{array}{l}
        \fix{y_i}{\oto{y / k: A:=t}}\\
        \text{if $x\in\single{\vec y}$ or $FV(u)\cap\single{\vec y}\neq \varnothing$,}\tag{$\dag$}\\
        \fix{y_i}{\oto{y / k: (A[x\bs u]):=t[x\bs u]}} \text{ otherwise.}
      \end{array} \right .
  \end{align*}
$(\ast)$ By using $\al$-conversion, if needed, $y$ is assumed to be not free in $u$ such that the variable condition is satisfied.

\noindent $(\dag)$ The variable condition here implies that the names of inductive types, constructors, and recursive functions are uniquely determined, and that they will never be changed once they are defined. Thus, these names are bound variables that differ from variables bound by $\Pi$ and $\la$.

\section{Constrained typing}\label{constraint}
\begin{figure}[t]
\begin{center}
\noindent \begin{tabular}{|p{14.5cm}|}\hline \\[6.5ex]
\hfill \rbm{\infer{\Ga\edash\case{e,Q,(h_k)_{k}}:^{\consd} Q\, \vec{u}\, e}
  {\deduce{\Ga\edash h_k :^{\consd} \Pi(\vec v : \vec V_k)^{<\!\consc}.\, Q\, \vec{w_k}\,(c_k\,\vec p\, \vec v) \text{ for all } (c_k : \Pi\vec p : \vec P.\, \Pi\vec v : \vec V_k.\, d_i\, \vec p\, \vec w_k)\in \De_C}
    {\deduce{ \Ga\edash Q:^\epsilon B \quad \calc(d_i\,\vec{p}:A ; B) \quad \Ga\edash e :^{\consc} d_i \,\vec{p}\,\vec{u} }  
        {\indic{n}\in\Ga & (d_i:\Pi \vec p : \vec P.\, A)\in \De_I & \lh(\vec p)=n } } } }
\hfill $ $\\[4ex]

\hfill \rbm{ \infer{ \wf(\Ga, x:^\consc M)}
{\Ga\edash M: s } } 
 \hfill
\rbm{ \infer{ \Ga\edash x:^\consc A} 
  {\wf(\Ga) & x:^\consc A\in\Ga } }
\hfill \rbm{ \infer{ \Ga\edash t:^\epsilon A} 
  {\Ga\edash t:^\consc A } }
 \hfill $ $\\[4ex]

 \hfill \rbm{\infer{\Ga\edash \Pi x:^\consc A.B:^\epsilon s_3}
  {\Ga\edash A:^\epsilon s_1 & \Ga,x:^\consc A\edash B:^\epsilon s_2 & \calp(s_1,s_2,s_3)} }
\hfill $ $\\[4ex]
 
 \hfill \rbm{\infer{\Ga\edash \Pi x:^\consc A.B= \Pi x:^\consc A'.B':^\epsilon s_3}
   {\Ga\edash A= A':^\epsilon s_1 & \Ga,x:^\consc A\edash B= B':^\epsilon s_2 & \calp(s_1,s_2,s_3)} }
\hfill $ $\\[7ex]

\hfill \rbm{\infer{\Ga\edash \la x:^\consc A.M:^\consd \Pi x:^\consc A.B}
   {\deduce{\Ga, x:^\consc A\edash M:^\consd B }
     {\Ga\edash A :^\epsilon s_1 & \Ga,x:^\consc A\edash B:^\epsilon s_2} } } 
\hfill $ $\\[7ex]

\hfill \rbm{\infer{\Ga\edash \la x:^\consc A.M = \la x:^\consc A'.M':^\consd \Pi x:^\consc A.B}
  {\deduce{\Ga, x:^\consc A\edash M= M':^\consd B}
    {\Ga\edash A= A':^\epsilon s_1 & \Ga,x:^\consc A\edash B:^\epsilon s_2} } }
\hfill $ $\\[4ex]

 \hfill \rbm{\infer{\Ga\edash MN:^\consd  B[x\bs N]}
   {\Ga\edash M:^\consd \Pi x:^\consc A.B & \Ga\edash N:^\consc A}} 
\hfill $ $ \\[4ex] 

 \hfill \rbm{\infer{\Ga\edash MN= M'N':^\consd B[x\bs N]}
   {\Ga\edash M= M':^\consd\Pi x:^\consc A.B & \Ga\edash N= N:^\consc A} }
\hfill $ $\\[2ex]
\hline
\end{tabular}
\end{center}
\caption{Constrained typing}
\label{fig:constraint}
\end{figure}

Note that not all fix-point definitions can be accepted because of the possibility of non-normalizing terms. If one of the arguments belongs to an inductive type, then the function starts with a case analysis, and recursive calls are performed on variables coming from patterns and representing subterms. This is the usual restriction implemented in Coq when a case distinction with respect to a distinguished inductive type in a definition of a (mutual) recursive function occurs. These restrictions are imposed by the so-called \emph{guarded-by-destructors condition} defined by \cite{gimenez}. Here, we follow the simplified version given by \cite{pm-habili} by using \emph{constrained typing}. 

The constraints will be imposed with respect to a variable $z$ and an inductive specification $\De_I,\De_C$, and they have three forms: the \emph{empty constraint} $\epsilon$, the constraint $\sma z$, which describes the structural smallness with respect to $z$, and the constraint $\equ z$, which describes the equivalence to $z$. The constraints will be added to any occurrence of a variable in a term. Let $\consc,\consd,...$ vary over constraints. The judgments of constrained typing have the form $\Ga\edash M:^\consc N$, where the constraints are added to all the variables from $\dom(\Ga)$. $M^\epsilon$ and $\Ga^\epsilon$ denote the term $M$ and the term sequence $\Ga$, respectively, where only the constraint $\epsilon$ is added.

Given a constraint $\consc$, the constraint $<\!\consc$ is defined as follows:
\[
  <\epsilon := \epsilon\,,\qquad <\equ z := \sma z\,,\qquad <\sma z  := \sma z\,.
\]
The following defines the restriction of a recursive call of inductive type when defining a mutual recursion. Given  a declaration $\De$, $\De^{<\!z}$ is defined as follows:
\begin{align*}
([\,])^{<\! z} & := [\,], \\
(\De, x:A)^{<\! z} & := \De^{<\! z}, x:^{\epsilon} A \quad\,\,\, \text{if } FV(A) \cap \dom(\De_I, \De_C) = \varnothing, \\
(\De, x:A)^{<\! z} & := \De^{<\! z}, x:^{<\! z} A \quad \text{if } FV(A) \cap \dom(\De_I, \De_C) \neq \varnothing.
\end{align*}
$\Ga^{<\!z}$ is defined similarly for a term sequence $\Ga$. In Figure \ref{fig:constraint}, we list the rules for constrained typing. The omitted rules contain only the empty constraint $\epsilon$.

\section{Interpretation of Inductive types}\label{A-inductive}
Suppose $\Ga \vd \cald$, where $\cald = \indic{n}$, and $\ga \in \lrbk{\Ga}$. As mentioned before, we suppress $\Ga$ and $\ga$ for better readability. Suppose that 
\[
\begin{array}{c}
  \De_I := d_0:A_0,..., d_\ell:A_\ell\,,\,\,
  \De_C := c_1:T_1,..., c_m:T_m\,, \\[2ex]
  A_i  :=  \Pi\vec p : \vec P.\, \Pi\vec b_i : \vec B_i .\, s_i\,,\,\,
  T_k  := \Pi\vec p : \vec P.\, \Pi\vec z_k : \vec Z_k.\, d_{i_{k}}\, \vec p\, \vec t_k \,, \,\,
  Z_{k,j} := \Pi \vec u_{k,j} : \vec H_{k,j} .\, d_{i_{k,j}}\, \vec p\,\, \vec w_{k,j}\,,
\end{array}
\]
where $\lh(\vec p)=n$,
$\lh(\vec B_i) = \ell_i$,
$\lh(\vec t_k) = \ell_{i_k}$,  
$j\in\nu(k):=\menge{j}{FV(Z_{k,j})\cap\dom(\De_I)\neq\varnothing}$, and 
$i_k, i_{k,j}\le \ell$\,. 
Furthermore, $\vec{Z'_k}$ is defined as
\[
Z'_{k,j}:= \left \{
         \begin{array}{ll} 
           Z_{k,j} & \text{ if }j\not\in\nu_k,\\[1ex]
           \Pi \vec u_{k,j} : \vec H_{k,j} .\, d'_{i_{k,j}} & \text{ if }j\in\nu_k,
         \end{array}\right .
\]
where $d'_{i_{k,j}}$ are fresh variables. Further, we suppose that $\lrbk{\vec P}, \lrbk{\vec{Z'_k}}_{\rho_k}$, ... are already well defined below in the definition of $\Phi$ (This will be the case by induction hypothesis.).\smallskip

Then, we set $\lrbk{\cald}:=\cali(\Phi)$, where $\Phi :=$
\[
\bigcup_{i\le \ell}\,\, \bigcup_{k \in \mu_i} 
\Bigg \{
  \frac{\bigcup_{j\in\nu_k} \menge{\cons{i_{k,j} , \vec p, \lrbk{\vec w_{k,j}}_{\vec p, \vec z_k, \vec u}, \vec\app (z_{k,j}, \vec{u}\, )} }   
      {\vec{u}\in\lrbk{\vec H_{k,j}}_{\vec p, \vec z_k} }} 
    {\cons{i,\vec p, \lrbk{\vec t_{k}}_{\vec p, \vec z_k}, \cons{k, \vec z_k} }} 
  \,\,\,\,\Big \lvert\,\,  \vec p, \vec z_k \in \lrbk{\vec P}, \lrbk{\vec Z'_k}_{\rho_k} 
\Bigg \}\,.
\]
Here, $\mu_i:=\menge{k}{d_{i_k}= d_i}$, 
and $\rho_k$ associates $\calv_{\ka_{r(k,j)}}$ with $d'_{i_{k,j}}$, where $r(k,j):= rank(d_{i_{k,j}})$.\medskip

We also set $\lrbk{\cald\cdot d_i}:= \vec\lam(f_i)$, 
where
$f_i(\vec p, \vec b_i) = \calif(\Phi)(i,\vec p, \vec b_i)$
for $\vec p,\vec b_i:\lrbk{\vec P, \vec B_i}$,
and
$\lrbk{\cald\cdot c_k}:= \vec\lam(g_k)$, 
where
$g_k(\vec p, \vec z_k) = \cons{k,\vec z_k}$
for $\vec p, \vec z_k : \lrbk{\vec P, \vec Z_k}$\,.

\section{Interpretation of well-founded structured recursion}\label{A-recursion}

Below, we use the same notation as that used in Appendix \ref{A-inductive} for the inductive types on which the recursive call is running.\smallskip

Suppose $\Ga\edash \fix{f_\ell}{R} : A_j$, where
\[
\begin{array}{c}
R := \oto{f / k : A := t}, \quad A_i\equiv \Pi \vec x_i : \vec B_i.\, A'_i, \quad \lh(\vec B_i)=k_i+1, \quad \ell\le n, \\[2ex]
(\Ga\edash A_i: s_i)_{\forall i\le n}, \quad (\Ga, \vec f : \vec A \edash t_i : A_i)_{\forall i\le n}, \quad \calf(\vec f, \vec A, \vec k, \vec t) \,.
\end{array}
\]
Let $\ga \in \lrbk{\Ga}$ be given. We suppress $\Ga$ and $\ga$ for better readability. Then, $ \lrbk{\fix{f_\ell}{R}}$ will depend on the $\iota$-reduction. 

Suppose $\Ga \vd \vec a, a_{k_\ell+1} : \vec B_\ell$, where $a_{k_\ell+1} = T_k \, \vec p \, \vec u_k$, and $B_{\ell, k_{\ell}+1} = x_{i_\ell} \, \vec p \, \vec t_k$, i.e., $k\in \mu_{i_\ell}$, and that $\vec a, \vec p$ are all fresh variables, while $\vec u_k$ represents a branch in the tree-like structure. All the free variables occurring in $\vec u_k$ should be fresh. Then, $(t_\ell[f_i\bs (\fix{f_i}{R})])\, \vec a\, (T_k\, \vec p\, \vec u_k)$ $\beta$-, $\iota$-reduces to the term $M_{\ell,k}$ which is obtained from the node term of the branch which $T_k \, \vec p \, \vec u_k$ represents. 

Suppose that for some $g_{\ell,k}, h_{\ell, k}\in \setn$, $u'_{1}, ..., u'_{g_{\ell,k}}, u'_{g_{\ell,k}+1}, ..., u'_{h_{\ell, k}}$ list all the subterms of $M_{\ell,k}$ that are structurally smaller than $\vec u_k$. Each $u'_q$ with $q \le g_{\ell,k}$  occurs as the $(k_{n_q}+1)$th argument of $(\fix{f_{n_q}}{R})$. Thus,
\[
((\fix{f_{n_q}}{R})\,\vec b_{n_q}\, u'_{q})
\]
is a subterm of $M_{j,k}$ for some $\vec b_{n_q} : B_{n_q,1},...,B_{n_q,k_{n_q}}$. Note that each $u'_q$ is an argument of some constructor $T_{m_q} :  \Pi\vec p : \vec P.\, \Pi \vec z_{m_q} : \vec Z_{m_q}.\, x_{i_{m_q}}\, \vec p\, \vec{t}_{m_q}$ such that the head of $u'_q$ is of some type $Z_{m_q,j_q} = \Pi \vec u_{m_q,j_q} : \vec H_{m_q,j_q} .\, x_{i_{m_q,j_q}}\, \vec p\,\, \vec{w}_{m_q,j_q}$ and that $x_{i_{m_q,j_q}} = x_{i_{n_q}}$ if $q \le g_{\ell,k}$. Furthermore, we suppose that $u'_{q_1},..., u'_{q_h}$ are all terms among $u'_1, ..., u'_{h_{\ell,k}}$, which are headed by some variables.

Thus, $u'_{q_r} = a'_{q_r} \, \vec u'_{m_{q_r},j_{q_r}}$ for a variable $a'_{q_r}$ of type  
\[
Z_{m_{q_r},j_{q_r}}  = \Pi \vec u_{m_{q_r},j_{q_r}} : \vec H_{m_{q_r},j_{q_r}} .\, x_{i_{m_{q_r},j_{q_r}}}\, \vec p\,\, \vec{w}_{m_{q_r},j_{q_r}}
\]
and for some terms $\vec u'_{m_{q_r},j_{q_r}}$. Note that $a'_{q_1}, ..., a'_{q_h}$ are exactly the free variables occurring in $\vec u_k$. Suppose that $u'_{q_{r_1}}, ..., u'_{q_{r_q}}$ are all such terms structurally smaller than $u'_q$. Then,
\[
u'_q = 
(\la \vec u_{m_{q_{r_1}},j_{q_{r_1}}} : \vec H_{m_{q_{r_1}},j_{q_{r_1}}}.\, ...\, 
\la \vec u_{m_{q_{r_q}},j_{q_{r_q}}} : \vec H_{m_{q_{r_q}},j_{q_{r_q}}} .\, \tilde u'_q)\, 
\vec u'_{m_{q_{r_1}},j_{q_{r_1}}} \cdots \vec u'_{m_{q_{r_q}},j_{q_{r_q}}}
\]
for some $\tilde u'_q$. Similarly, $M_{\ell,k}$ can be written as follows:
\[
(\la \vec u_{m_{q_{r_1}},j_{q_{r_1}}} : \vec H_{m_{q_{r_1}},j_{q_{r_1}}}.\, ...\, 
\la \vec u_{m_{q_{r_q}},j_{q_{r_q}}} : \vec H_{m_{q_{r_q}},j_{q_{r_q}}} .\, 
(\fix{f_{n_q}}{R})\,\vec b_{n_q}\, \bar u'_{q}  )\, 
\vec u'_{m_{q_{r_1}},j_{q_{r_1}}} \cdots \vec u'_{m_{q_{r_q}},j_{q_{r_q}}}
\]
where 
\[
\bar u'_q = \la \vec u_{m_{q_{r_1}},j_{q_{r_1}}} : \vec H_{m_{q_{r_1}},j_{q_{r_1}}}.\, ...\, 
\la \vec u_{m_{q_{r_q}},j_{q_{r_q}}} : \vec H_{m_{q_{r_q}},j_{q_{r_q}}} .\, 
\tilde u'_q\,.
\]
Further, set
\[
\bar M_{\ell,k} = 
(\la \vec u_{m_{q_{r_c}},j_{q_{r_c}}} : \vec H_{m_{q_{r_c}},j_{q_{r_c}}})_c .\,
(\fix{f_{n_q}}{R})\,\vec b_{n_q}\, \bar u'_{q}  
\]
of type 
$(\Pi \vec u_{m_{q_{r_c}},j_{q_{r_c}}} : \vec H_{m_{q_{r_c}},j_{q_{r_c}}})_c .\,
A'_{n_q}\single{\vec x_{n_q} : \vec b_{n_q}, u'_q}$,
where $c$ ranges over $1,...,q$.
Lastly, let $M'_{\ell,k}$ be obtained from $M_{\ell,k}$ by replacing $\bar M_{\ell,k}$ with a fresh variable $X_{n_q}$.\smallskip

Then, $\displaystyle \bigcup_{\ell} \,\lrbk{\fix{f_\ell}{R}}$ will  correspond to the fixpoint of the following rule set:
\begin{eqnarray*}
\Psi & := & \bigcup_{\ell\le n}\,\, \bigcup_{k \in \mu_{i_\ell}} 
\Bigg \{
  \frac{\bigcup_{q\in \single{1,...,g_{\ell,k}}} 
             \menge{\cons{ \lrbk{\vec b_{n_q}}_{\rho}, \lrbk{u'_{q}}_{\rho} ,  \vec\app (v'_{q}, \vec u_{m_{q_{r_1}}, j_{q_{r_1}}},...,\vec u_{m_{q_{r_q}}, j_{q_{r_q}}} )} }  {\calc} }
         {\cons{\al_1,...,\al_{k_\ell}, \cons{k, \lrbk{\vec{u}_k}_{\rho}}, \lrbk{M'_{\ell,k}}_{\eta} } } \,\,\,\,\Big \lvert \\
  & & \hspace{1.9cm} \calc \equiv \vec u_{m_{q_{r_1}}, j_{q_{r_1}}} \in \lrbk{\vec H_{m_{q_{r_1}}, j_{q_{r_1}}}},..., \vec u_{m_{q_{r_q}}, j_{q_{r_q}}} \in \lrbk{\vec H_{m_{q_{r_q}}, j_{q_{r_q}}}},\\[1ex]
  & & \hspace{1.9cm} \vec \al \in \lrbk{B_{\ell,1}},...,\lrbk{B_{\ell,k_\ell}},\\
  & & \hspace{1.9cm} v_{q_r} \in \lrbk{\Pi \vec{u}_{m_{q_r},j_{q_r}}: \vec H_{m_{q_r},j_{q_r}} .\, x_{i_{m_{q_r},j_{q_r}}}\, \vec p\,\, \vec{w}_{m_{q_r},j_{q_r}} },\\
  & & \hspace{1.9cm} \text{$\rho$ associates $\vec\al$ to $\vec a$, $v_{q_r}$ to $a'_{q_r}$, $r \in \single{1,...,h}$,}\\
  & & \hspace{1.9cm} v'_{q} \in \lrbk{(\Pi \vec u_{m_{q_{r_c}},j_{q_{r_c}}}  : \vec H_{m_{q_{r_c}},j_{q_{r_c}}})_c .\, 
  A'_{n_q}\single{\vec x_{n_q} : \vec b_{n_q}, u'_q} }_{\rho},\\
  & & \hspace{1.9cm} \text{$\eta$ associates $\vec\al$ to $\vec a$, $v_{q_r}$ to $a'_{q_r}$, $r \in \single{1,...,h}$, and $v'_{q}$ to $X_{n_q}$}.
\Bigg \} 
\end{eqnarray*}
We set $\lrbk{\fix{f_\ell}{R}} := \vec \lam (h)$, where $h$ is a function such that
\[
h(a_1,...,a_{k_\ell},\cons{k,\vec z_k}) = \calif (\Psi) (a_1,...,a_{k_\ell},\cons{k,\vec z_k})
\]
where $\vec a, \cons{k,\vec z_k} \in \lrbk{\vec B_\ell}$.

%

\end{document}